\documentclass[journal,12pt,draftclsnofoot,onecolumn]{IEEEtran}
\usepackage[T1]{fontenc}
\usepackage[linesnumbered,ruled,vlined]{algorithm2e}
\usepackage{amssymb,amsmath}
\usepackage{caption}
\usepackage{subcaption}
\usepackage{comment}
\usepackage{color,soul}
\usepackage{multirow}
\usepackage{ragged2e} 
\usepackage{algorithmicx}
\usepackage{algpseudocode}
\usepackage{amsthm}
\usepackage{cite}
\usepackage{array}

%
\usepackage{cite}

\newcolumntype{L}[1]{>{\raggedright\let\newline\\\arraybackslash\hspace{0pt}}m{#1}}
\newcolumntype{C}[1]{>{\centering\let\newline\\\arraybackslash\hspace{0pt}}m{#1}}
\newcolumntype{R}[1]{>{\raggedleft\let\newline\\\arraybackslash\hspace{0pt}}m{#1}}
\newcommand{\be}{\begin{equation}}
\newcommand{\ee}{\end{equation}}
\newcommand{\bee}{\begin{eqnarray}}
\newcommand{\eee}{\end{eqnarray}}
\newcommand{\bse}{\begin{subequations}}
	\newcommand{\ese}{\end{subequations}}

%
\ifCLASSINFOpdf
   \usepackage[pdftex]{graphicx}
   \graphicspath{{../}{../}}
   \DeclareGraphicsExtensions{.pdf,.jpeg,.png,.jpg}
\else
  \usepackage[dvips]{graphicx}
   \graphicspath{{../}}
   \DeclareGraphicsExtensions{.eps}
\fi

\usepackage[caption=false,font=footnotesize]{subfig}
%
\hyphenation{op-tical net-works semi-conduc-tor}
\setlength{\intextsep}{-1ex} 

\newtheorem{propos}{Proposition}
\newtheorem{thm}{Theorem}
\newtheorem{defi1}{Definition}

\begin{document}
\title{Maximum-Quality Tree Construction for Deadline-Constrained Aggregation in WSNs }

\author{Bahram Alinia, Mohammad~H.~Hajiesmaili, Ahmad~Khonsari, and Noel Crespi
	\IEEEcompsocitemizethanks{\IEEEcompsocthanksitem A preliminary version of this paper \cite{Alinia15} has appeared in proceedings of \emph{IEEE International Conference on Computer Communications (INFOCOM)}, 2015. 
	
	B. Alinia and N. Crespi are with the department RS2M, Institut Mines-Telecom, Telecom SudParis, France. \mbox{E-mail: bahram.alinia@telecom-sudparis.eu, noel.crespi@mines-telecom.fr}.\protect
		\IEEEcompsocthanksitem M. H. Hajiesmaili is with Institute of Network Coding, The Chinese University of Hong Kong, Sha Tin, N.T., Hong Kong. \mbox{E-mail: mohammad@ie.cuhk.edu.hk}
		\IEEEcompsocthanksitem 	A. Khonsari is with School of Electrical and Computer Engineering, College of Engineering, University of Tehran, and School of Computer Science, Institute for Research	in Fundamental Sciences (IPM), Niavaran Sq., Tajrish Sq., Tehran, Iran, P. O. Box 19395-5746. \mbox{E-mail: ak@ipm.ir}
	}
	\thanks{}}


%

\maketitle

\begin{abstract}
In deadline-constrained wireless sensor networks (WSNs), quality of aggregation ($QoA$) is determined by the number of participating nodes in the data aggregation process. The previous studies have attempted to propose optimal scheduling algorithms to obtain the maximum $QoA$ assuming a fixed underlying aggregation tree. 
However, there exists no prior work to address the issue of constructing optimal aggregation tree in deadline-constraints WSNs. The structure of underlying aggregation tree is important since our analysis demonstrates that the ratio between the maximum achievable $QoA$s of different trees could be as large as $O(2^D)$, where $D$ is the deadline. This paper casts a combinatorial optimization problem to address optimal tree construction for deadline-constrained data aggregation in WSNs. While the problem is proved to be NP-hard, we employ the recently proposed Markov approximation framework and devise two distributed algorithms with different computation overheads to find close-to-optimal solutions with bounded approximation gap. To further improve the convergence of the proposed Markov-based algorithms, we devise another initial tree construction algorithm with low computational complexity. Our extensive experiments for a set randomly-generated scenarios demonstrate that the proposed algorithms outperforms the existing alternative methods by obtaining better quality of aggregations. 
\end{abstract}

\IEEEpeerreviewmaketitle

\section{Introduction}
\label{sec_intro}
\subsection{Motivation}
\IEEEPARstart{N}{owadays}, monitoring and tracking applications are intrinsically intertwined with a plethora of Wireless Sensor Networks (WSNs). To accomplish monitoring effectively, data gathering is witnessed as a fundamental operation in such applications. 
In addition, limited battery of sensors emphasizes the need for energy-aware data gathering design. 
However, packet transmission as the major source of energy depletion turns energy conservation in data gathering into an acute problem~\cite{Heinzelman}. 

To reduce the energy depletion of sensors due to excessive packet transmission, \emph{data aggregation} \cite{Heinzelman,Rajagopalan} has been proposed as a promising energy conservation mechanism to eliminate the necessity of redundant transmission. In a typical data aggregation scenario, a \textit{data aggregation tree} is constructed over the underlying WSN topology \cite{Shroff} and some intermediate nodes are solicited to aggregate/fuse the gathered data of different sensors by in-network computation and transmit a single packet to the next hop. Consequently, the amount of packet transmission is significantly reduced, thereby the overall energy consumption decreases. 

Despite its apparent benefits in reducing overall energy usage, data aggregation, however, can impose additional delay toward sending the data to the sink since the intermediate nodes in aggregation tree must \textit{wait} to gather enough data from the predecessors and then aggregate and forward the data to the next hop. This additional data aggregation waiting time might be intolerable for many real-time and surveillance applications that are sensitive to the amount of the latency of the receiving data. For example in target tracking application, the detected location of a moving object may has perceptible error with the actual location if data aggregation process takes too long \cite{Shroff2013}. Thus, the delay contribution of a data aggregation algorithm must be taken into account in an efficient design so as to respect the delay constraint of the application.

Some previous researches have considered participation of all sensor nodes in data aggregation and aimed to minimize the aggregation delay as the objective \cite{Li,Xu}. 
However, participation of all sensor nodes introduces severe interference and may lead to terminating data aggregation in a time that is beyond the application's tolerable delay even attempting in delay minimization. Consequently, these designs fail to guarantee a maximum application-specific tolerable delay for data aggregation. 

\subsection{Deadline-Constrained Data Aggregation Scenario and Challenges}
As a promising alternative, the idea of deadline-constrained data aggregation has been advocated in the recent studies \cite{Shroff,Alinia}. The general idea is to incorporate a maximum application-specific tolerable delay, namely \textit{deadline}, as a hard constraint, and try to improve the \emph{Quality of Aggregation} ($QoA$) by participating the sensor nodes as much as possible before the aggregation time exceeds the deadline. Subsequently, the problem turns into maximizing $QoA$, subject to the application-specific deadline constraint~\cite{Shroff,Alinia}. Toward this goal, the following two critical challenges should be addressed appropriately: $1)$ the scheduling policy, and $2)$ the structure of aggregation tree.

\textbf{(1) Scheduling policy.} Delay in data aggregation is originated from two sources: (i) waiting time to gather the data of predecessor nodes in data aggregation tree, and (ii) waiting time due to the interference issue which is an inherent challenge in wireless networks. 
In this way, if overall waiting time of a node exceeds a specific value, its data cannot be delivered to the sink before the deadline. Devising an efficient policy that schedules the nodes' waiting time while preventing degradation of the $QoA$ and meeting the delay constraint of the application is a challenging problem.  
The previous research has tried to find an efficient scheduling such that $QoA$ is maximized \cite{Shroff}. The details are explained in Section.~\ref{sec:rel}. 

\textbf{(2) The structure of aggregation tree.} The structure of data aggregation tree is another important factor such that the number of participant nodes in data aggregation could be further improved by constructing a proper data aggregation tree. 
  For data aggregation, a tree rooted at the sink node is the common structure since it simplifies design of routing and aggregation protocols and also helps to avoid problems such as double counting \cite{Nath}. Without constructing an appropriate aggregation tree, we may not be able to achieve a desired level of $QoA$ even by designing the scheduling algorithm optimally. We show that even with optimal scheduling policy, the structure of underlying aggregation tree can make a big difference on achievable $QoA$.

While the first critical challenge on how to schedule the sensors is tackled in the previous studies \cite{Shroff,Alinia}, the ultimate optimal design, however, cannot be fully achieved without taking into account the second critical challenge on how to construct an optimal underlying data aggregation tree. The goal of this paper is to address the second challenge. 

\subsection{Summary of Contributions}
In this paper, we aim to construct an optimal aggregation tree and run an optimal tree-specific
scheduling algorithm on the tree to maximize $QoA$. However, constructing the optimal aggregation tree given the network topology is nontrivial even in centralized manner. This is more problematic when we seek an appropriate solution amenable to distributed realization so as the sensor nodes choose their parents just using local information. We address this problem in single-sink WSNs setting through the following contributions:

\begin{itemize}
\item We demonstrate the impact of data aggregation tree structure on $QoA$ by theoretical discussion and explanatory example. We show that the ratio between the maximum achievable $QoA$s of two data aggregation trees is $O(2^D)$ in the worst case where $D$ is the aggregation deadline. This observation makes the problem of constructing maximum $QoA$ tree intriguing. Besides, we prove that the problem of optimal tree construction belongs to the class of NP-hard problems.

\item After formulating the underlying tree construction problem, we leverage the recently proposed Markov approximation framework \cite{Chen2} as a general framework toward solving combinatorial network problems in distributed fashion. By addressing the unique challenges devoted to our problem, we devise two close-to-optimal algorithms in which the sensor nodes contribute to migrating toward the optimal tree in an iterative manner. The highlights are distributed implementation, bounded approximation gap, and robustness against the error of global estimation of sensor nodes by local information.

\item  To further improve the obtained $QoA$ and convergence of Markov approximtion framework, we propose an initial tree construction algorithm, called \textit{FastInitTree}, as the initialization step of our main algorithm.
The algorithm features are low computational complexity and close-to-optimal estimate for the initial tree construction, i.e., the $QoA$ obtained by initial tree constructed by \textit{FastInitTree} is close to the $QoA$ achieved when the main algorithm converges.  We also analyze the validity of constructed trees when the deadline value is changed by the application and we prove that for the cases that the deadline decreases there is no need to executed the algorithms again and construct a new tree. 


\item Through extensive experiments, we show the superiority of our algorithms and compare them to the previous work\cite{Shroff}. Results demonstrate that four different versions of our algorithm (two algorithms which start from random tree and two algorithms which use  \textit{FastInitTree}'s output as initial Markov chain state) greatly increase the $QoA$ of the proposed algorithm in \cite{Shroff} (that is only by finding optimal scheduling without optimal tree construction).
\end{itemize}

\subsection{Paper Organization}
The rest of this paper is organized as follows. We review the related work in Section~\ref{sec:rel}. In Section~\ref{sec_SystemModel}, the system model is introduced and by motivating examples and theoretical analysis the impact of aggregation tree on $QoA$ is investigated. Problem formulation and NP-hardness analysis are explained in Section~\ref{sec_OptimizationProblem}. In Section~\ref{sec:markov}, we devise two distributed algorithms for the problem. In Section~\ref{sec:init}, we explain our tree initialization algorithm. Simulation results are described in Section~\ref{sec:sim}. Finally, concluding remarks and future directions are mentioned in Section~\ref{sec:conc}. 

\section{Related Work}
\label{sec:rel}
\subsection{Minimum delay and deadline-constrained aggregation}
The problem of minimum delay data aggregation tackled intensively in the literature.
In \cite{Chen}, it is proved  that the Minimum Latency Aggregation Scheduling (MLAS) problem
is NP-hard and a $(\Delta -1)$-approximation algorithm has been presented where $\Delta$ is the maximum node degree 
in the network. The current best approximation algorithms in \cite{Wan,Xu} achieve an upper bound of $O(\Delta +R)$ on data aggregation delay where $R$ is the network radius. 
While most studies consider a protocol interference model, the studies in \cite{Li2,Li} assume a physical interference model that is more practical than the former. 
In \cite{Li2}, a scheduling algorithm for tree-based data aggregation is designed that achieves a constant approximation ratio by bounding the delay at $O(\Delta +R)$. The work is extended in \cite{Li} for any arbitrary network topology. A connected dominating set or maximum independent sets are employed in \cite{Guo} to provide a latency bound of $4R^\prime +2\Delta -2$ where $R^\prime$ is inferior network radius.
 
Within the context of deadline-constrained data aggregation models, the goal is not to minimize the delay as an objective of the problem. Rather, the objective is to maximize the number of  sensor nodes participating in aggregation while respecting the application-specific deadline.
This type of real-time data aggregation has recently gained attention in some works  \cite{Shroff2014,Shroff2013,Shroff,Shroff2011,Alinia}. In this regard, \cite{Shroff} presented a polynomial time optimal algorithm for the problem under the deadline and one-hop interference constraints.
The problem is extended in \cite{Shroff2013} for a network with unreliable links under an additional constraint on nodes' energy level. In \cite{Shroff2013}, the authors proved that in a network with $V$ nodes, the problem is NP-hard when the maximum node degree of the aggregation tree is $\Delta$. They proposed a polynomial-time exact algorithm when $\Delta=O(\log V)$. In \cite{Alinia}, the authors considered the same problem of \cite{Shroff} by taking into account the effect of data redundancy and spatial dispersion of the participants in the quality of final aggregation result and proposed an approximate solution for proved NP-hard problem. In a more general case, \cite{Shroff2014} tackles the utility maximization problem in deadline constrained data aggregation and collection then provides efficient approximation solutions. A main drawback of the aforementioned studies is that they all have tried to maximize the quality of data aggregation on a given tree and neglect the impact of the data aggregation tree structure.

\subsection{Optimum Aggregation Tree Construction}
Several studies have tackled the problem of constructing optimal data aggregation tree \cite{Tan,Wu,Li3,Fahmy,Kuo} where all have been shown to be NP-hard. 
In \cite{Fahmy}, the problem of maximum lifetime aggregation tree is studied for single sink WSNs. 
\cite{Wu} extends the problem for multi-sink WSNs.
The problem of constructing an aggregation tree in order to minimize the total energy cost is addressed in \cite{Kuo}. As solution, a constant factor approximation algorithm is proposed. 
In  \cite{Tan}, the problem of constructing a minimum cost aggregation tree under 
Information Quality (IQ) constraint has been tackled. The authors considered event-detection WSNs and defined IQ as detection accuracy. \cite{Shan} shows that for the shortest path trees, the problem of building maximum lifetime data aggregation tree can be solved in polynomial time and propose two centralized and distributed algorithms.
In this paper, however, we aim to construct maximum quality aggregation tree under deadline constraint. This problem has not been addressed yet by the research community. Moreover, our solution method in solving the problem is completely different from the previous research and is based on a recently-proposed theoretical foundation, namely Markov approximation that may be considered as a potential solution for the same category of problems.

\section{System Model and Problem Motivation}
\label{sec_SystemModel}

\begin{table}[!t]
	\begin{center}
		\begin{tabular}{|c|L{12cm}|}
			\hline
			\textbf{Notation} & \textbf{Definition} \\
			\hline \hline
			$\mathcal{V}$ & Set of sensor nodes, $V \triangleq |\mathcal{V}|$ \\\hline
			$\mathcal{T}(\mathcal{G})$ & Set of all spanning trees in graph $\mathcal{G}$ \\\hline
			$D$ &  Sink deadline \\\hline
			$H^\psi(i)\subseteq \mathcal{V}$ & The set that consists of node $i$ and all its predecessors (except the sink) in tree $\psi$\\\hline
			$F_i$ & $F_i=1$, if node $i$ is a source, $F_i=0$, otherwise\\\hline
			$n_i^\psi$ & $n_i^\psi=1$, if node $i$ in tree $\psi$ is allowed to send data to its parent, $n_i^\psi=0$, otherwise (${\vec n^\psi=[n^\psi_i,i\in \mathcal{V}]}$) \\\hline
			$W_i^\psi$ & Waiting time of \emph{participant} node $i$ in aggregation tree $\psi$, ($\vec W^\psi=[W^\psi_i,i\in \mathcal{V}]$) \\\hline
			 $QoA_{\psi}(\vec{W})$ & The $QoA$ in tree $\psi$ and deadline $D$ and assigned waiting times determined by $\vec{W}$\\\hline		
		\end{tabular} 
	\end{center}
	\caption{Summary of key notations}
	\label{tbl:not}
\end{table}
\subsection{WSN System Model}
\label{sec:sys}
Consider a WSN whose topology is a graph ${\mathcal{G}=(\mathcal{V} \cup \{S\},\xi)}$ where $S$ is the sink node, $\mathcal{V}$ is the set of sensor nodes with $|\mathcal{V}| \triangleq V$, and $\xi$ is the set of links between sensor nodes. We assume that all nodes have a fixed communication range and $(i,j)\in\xi$ if nodes $i$ and $j$ are adjacent, i.e., they are in the communication range of each other. Without loss of generality, we assume that each link has a unit capacity. Moreover, we suppose that the system is time-slotted and synchronized and a transmission takes exactly one time slot. In deadline-constrained scenario, the data has to be received by the sink by the end of at most $D$ time slots, where the value of $D$ is specified by the deadline requirement of the applications. The interference model is one-hop such that two active links having a node in common cause an interference \cite{Shroff,Shroff2011,Alinia}.

The data aggregation is done using an overlay spanning tree $\psi\in\mathcal{T}(\mathcal{G})$ (rooted at the sink node) on top of the underlying WSN topology where $\mathcal{T}(\mathcal{G})$ is the set of all spanning trees in the graph $\mathcal{G}$. Let $H^\psi(i)\subseteq \mathcal{V}$ be the set that consists of node $i$ and all its predecessors (except the sink) in aggregation tree $\psi$.  

We consider two types of nodes, source nodes and relay nodes. Source nodes can sense their own data and forward/aggregate the other nodes' data. Relay nodes just forward/aggregate the data of other nodes. To illustrate this, we use binary variable $F_i$, where $F_i=1$ if node $i$ is a source and $F_i=0$, otherwise.
Moreover, binary variable $n_i^\psi$ with $n_i^\psi=1$ indicates that node $i$ in tree $\psi$ is allowed to send data to its parent and ${\vec n^\psi=[n^\psi_i,i\in \mathcal{V}]}$. Indeed, $n_i^\psi=1$ indicates that node $i$ participates in data aggregation. In this case, if $F_i=1$ then node $i$ is a source participant, otherwise node $i$ participates in data aggregation as a relay node, i.e., it just aggregates the received data from its successors and forwards to its parent.

Let ${\mathcal{V}_{\textrm{leaf}}^\psi\subseteq \mathcal{V}}$ be the set of all leaf nodes and 
$\mathcal{V}_{\textrm{sel-src}}^\psi\subseteq \mathcal{V}$ be the set of source nodes \emph{selected} for data aggregation in tree $\psi$. Indeed, $i\in \mathcal{V}^\psi_{\textrm{sel-src}}$, if $i$ is a source and all of its predecessors are selected for aggregation, i.e., $\mathcal{V}_{\textrm{sel-src}}^\psi=\left\{i\in \mathcal{V}: F_i=1\textrm{ and }\prod_{j\in H^\psi(i)}n_j^\psi=1\right\}.$

To devise a feasible aggregation scheme, we assign a waiting time of $W_i^\psi, 0\leq W_i^\psi\leq D$ time slots to each \emph{participant} node $i$ in aggregation tree $\psi$ and $\vec W^\psi=[W^\psi_i,i\in \mathcal{V}]$. The notion $QoA_{\psi}(\vec{W})$ denotes the quality of aggregation in tree $\psi$ under assigned waiting times determined by $\vec{W}$. We follow the definition in \cite{Shroff} for $QoA_{\psi}(\vec{W})$ as the number of source nodes that participate in data aggregation, i.e.,
\begin{equation}
\label{eq:QoA}
QoA_{\psi}(\vec{W})={|\mathcal{V}^\psi_{\textrm{sel-src}}|}={\sum_{i\in \mathcal{V}} F_i\prod_{j\in H^\psi(i)} n^\psi_j}.
\end{equation}

For notational convenience, we define $QoA_{\psi}(\vec{W}_i)$ as $QoA$ of the sub-tree of tree $\psi$ rooted at node $i$ with assigned waiting time of $W_i$.
Hereafter, we use $QoA$ and $W_i$ instead of $QoA_{\psi}(\vec{W}_i)$ and $W_i^\psi$ when the corresponding tree and scheduling are obvious, or a specific tree or scheduling is not the matter of concern. The summary of notations are listed in Table~\ref{tbl:not}.

\subsection{Optimal Scheduling Policy}
Now, we proceed to explain by example how to find the maximum $QoA$ in a given tree in simple and tractable cases. 
The examples also clarify the data aggregation model for the reader.

\begin{figure}[t!]
	\centering
	\includegraphics[width=.4\textwidth]{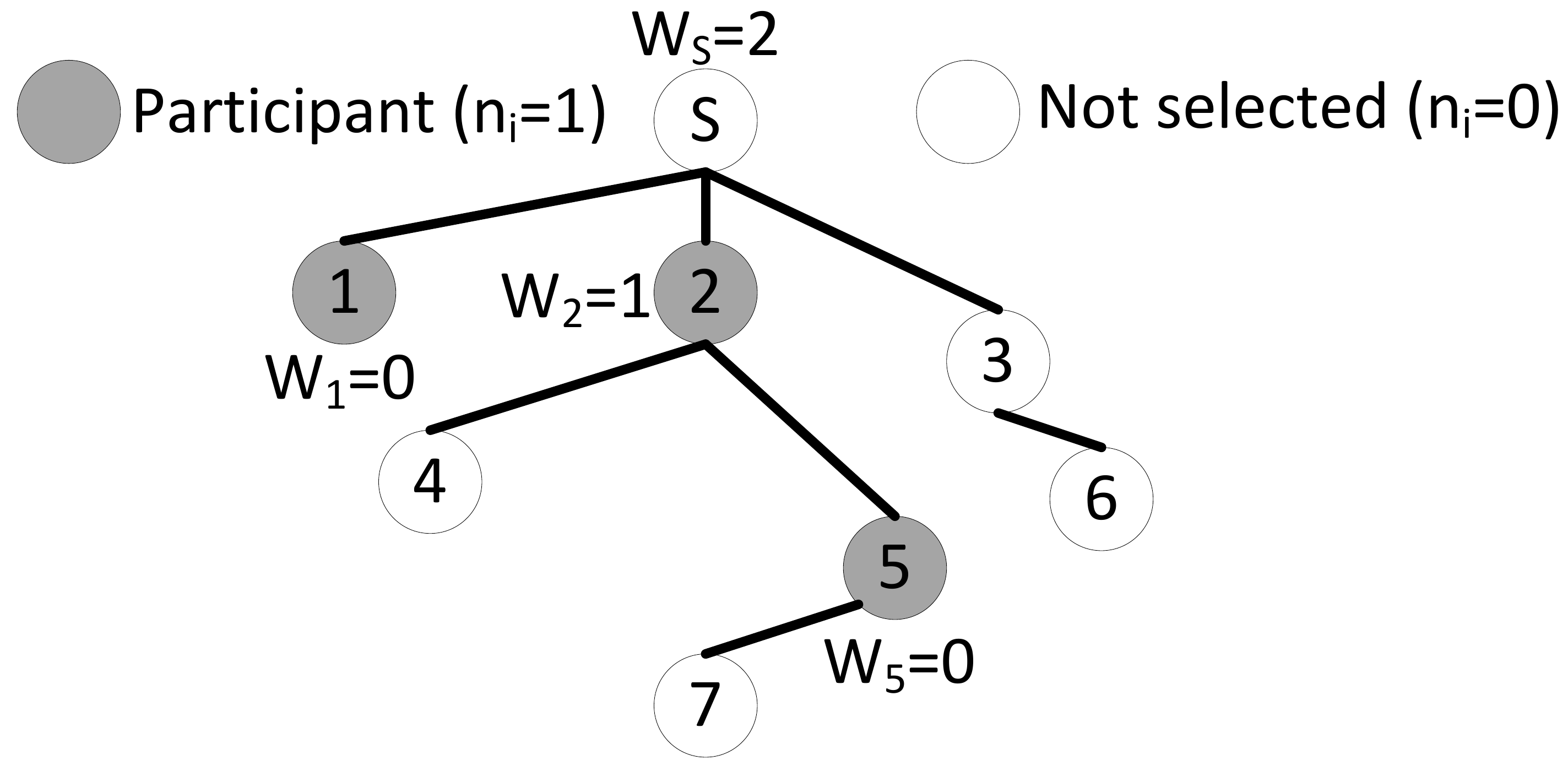}
	\caption{Example 1. Optimal scheduling in a fixed underlying tree with $D=2$.}
	\label{fig:Ex1}
\end{figure}

\textbf{Example 1. (Calculating the maximum $QoA$ given a fixed aggregation tree):} Consider the data aggregation tree in Fig.~\ref{fig:Ex1} where 
the sink deadline is set to $D=2$ and all nodes are source.
With the given deadline, the sink can choose at most $D=2$ children (due to one-hop interference constraint \cite{Shroff}) and assign their waiting times as \emph{distinguished} values between $0$ and $D-1=1$. To maximize the number of source participant nodes ($QoA$), one of the possible choices for the sink is the assignment of $W_1=0$ and $W_2=1$. With this assignment, node $2$ can assign a waiting time of $0$ to one of its children (in this example node $5$ with $W_5=0$). Eventually, the maximum $QoA$ is $3$ and participant nodes are $1, 2, 5$. During the aggregation process, in the first time slot, node $1$ and node $5$ send their packets to their parents in parallel. In the second time slot, node $2$ aggregates its own packet with the received data from node $5$ and sends the aggregated data to the sink. It is not hard to see that this scheduling policy is optimal, i.e., it achieve the maximum $QoA$ given the fixed aggregation tree.  
As a non-optimal waiting time assignment, consider the assignment of $W_1=1, W_2=0$. In this case, the final $QoA$ is $2$ with participant nodes $1$ and $2$. With $D=3$, the maximum $QoA$ is $7$ and the optimal assignment is
$W_1=W_4=W_6=W_7=0, W_2=2, W_3=1$ and $W_5=1$.
In \cite{Shroff}, an algorithm is proposed to achieve the maximum $QoA_{\psi}(\vec{W})$ in a given tree $\psi$. 
The scheduling algorithm in \cite{Shroff} is optimal given a fixed tree as input and it does not change the structure of the tree for further improvement of $QoA$.

\subsection{Investigating The impact of Aggregation Tree}
We argue that the aggregation tree structure may significantly impact $QoA$. Firstly, we claim that the optimal aggregation does not follow any regular pattern. For example, the structure of the optimal tree should not follow \textit{chain-like} long trees. The reason is that when sink imposes a deadline $D$, all nodes with height ``$\geq D$'' cannot participate in data aggregation due to the delay constraint. Consequently, the height of the tree is limited to $D$ and long trees cannot be proper structures. Instead, one might suggest a tree so as the height of the majority of nodes is ``$\leq D$''. But, the waiting time of a node with height $h$ is upper bounded by $D-h$ and hence it can choose at most $D-h$ children of itself as the participants. The others together with their successors are ignored. Thus, same as the long tree, a \textit{star-like} fat tree may yield a non-optimal $QoA$. \emph{Roughly speaking, an aggregation tree which is neither so long nor so fat is suitable.}
It is important to stress that the above conditions cannot bring significant insights to devise an algorithm to construct the optimal tree. In the next example, we demonstrate that maximum $QoA$ of two aggregation trees of a same network can be large even using optimal scheduling of \cite{Shroff}.

\begin{figure*}[t!]
	\centering
	\vspace{-0.05em}
	\begin{subfigure}[b]{0.3\textwidth}
		\begin{center}
			\includegraphics[width=\textwidth]{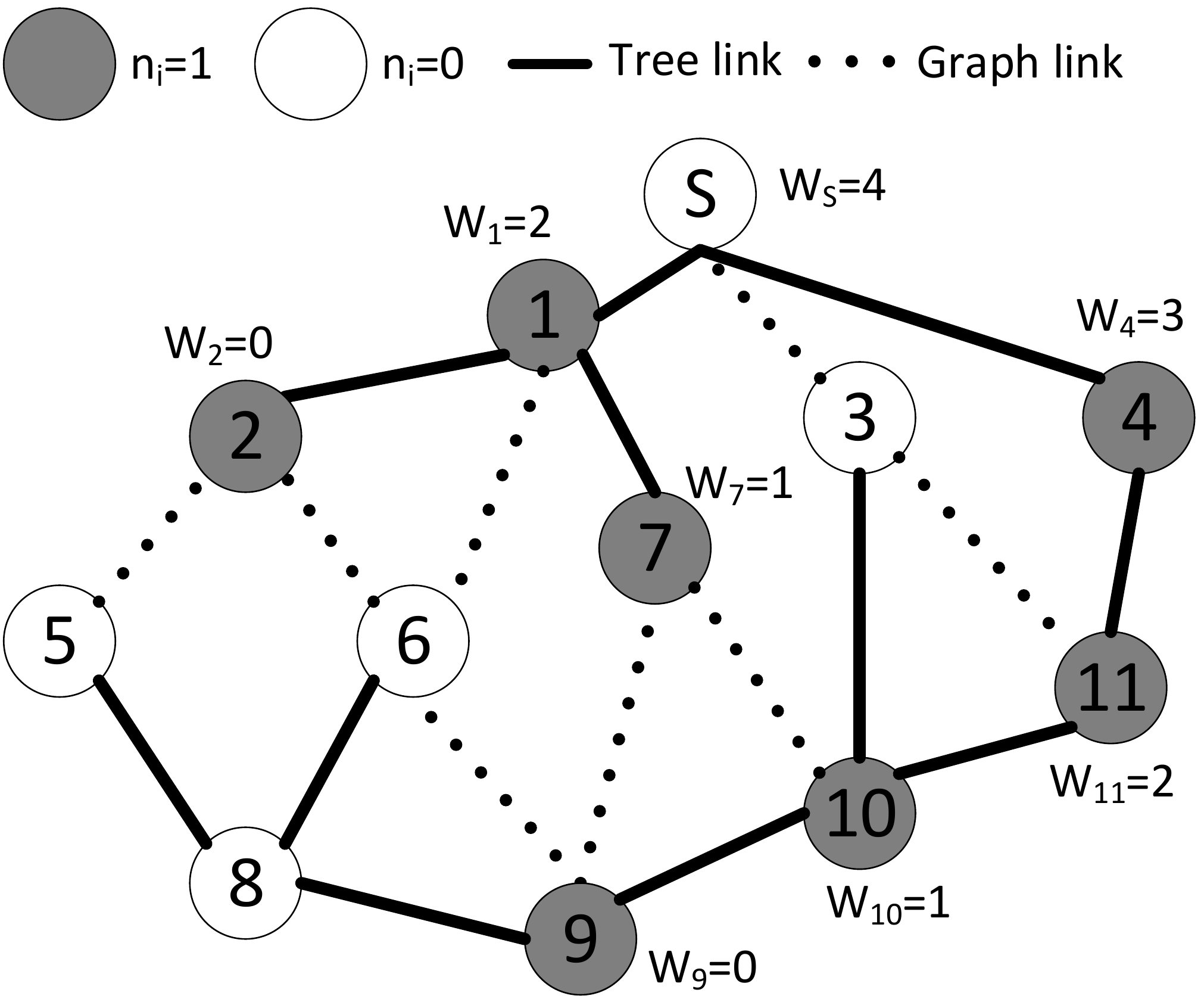}
			\caption{Long tree: $QoA=7$}
			\label{fig:Ex2a}
		\end{center}
	\end{subfigure}%
	~ 
	\begin{subfigure}[b]{0.3\textwidth}
		\begin{center}
			\includegraphics[width=\textwidth]{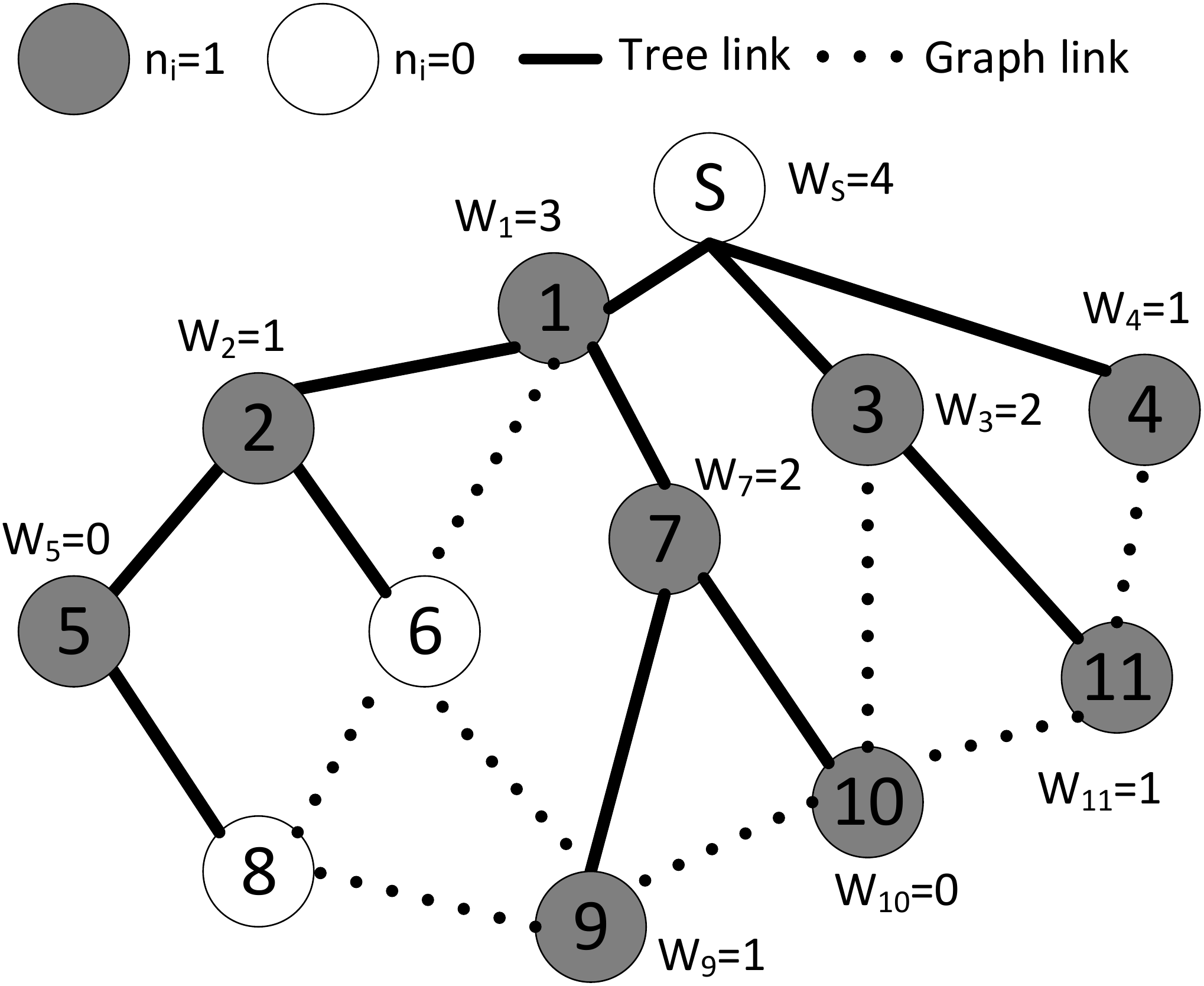}
			\caption{Random tree: $QoA=9$}
			\label{fig:Ex2b}
		\end{center}
	\end{subfigure}%
	\begin{subfigure}[b]{0.3\textwidth}
		\begin{center}
			\includegraphics[width=\textwidth]{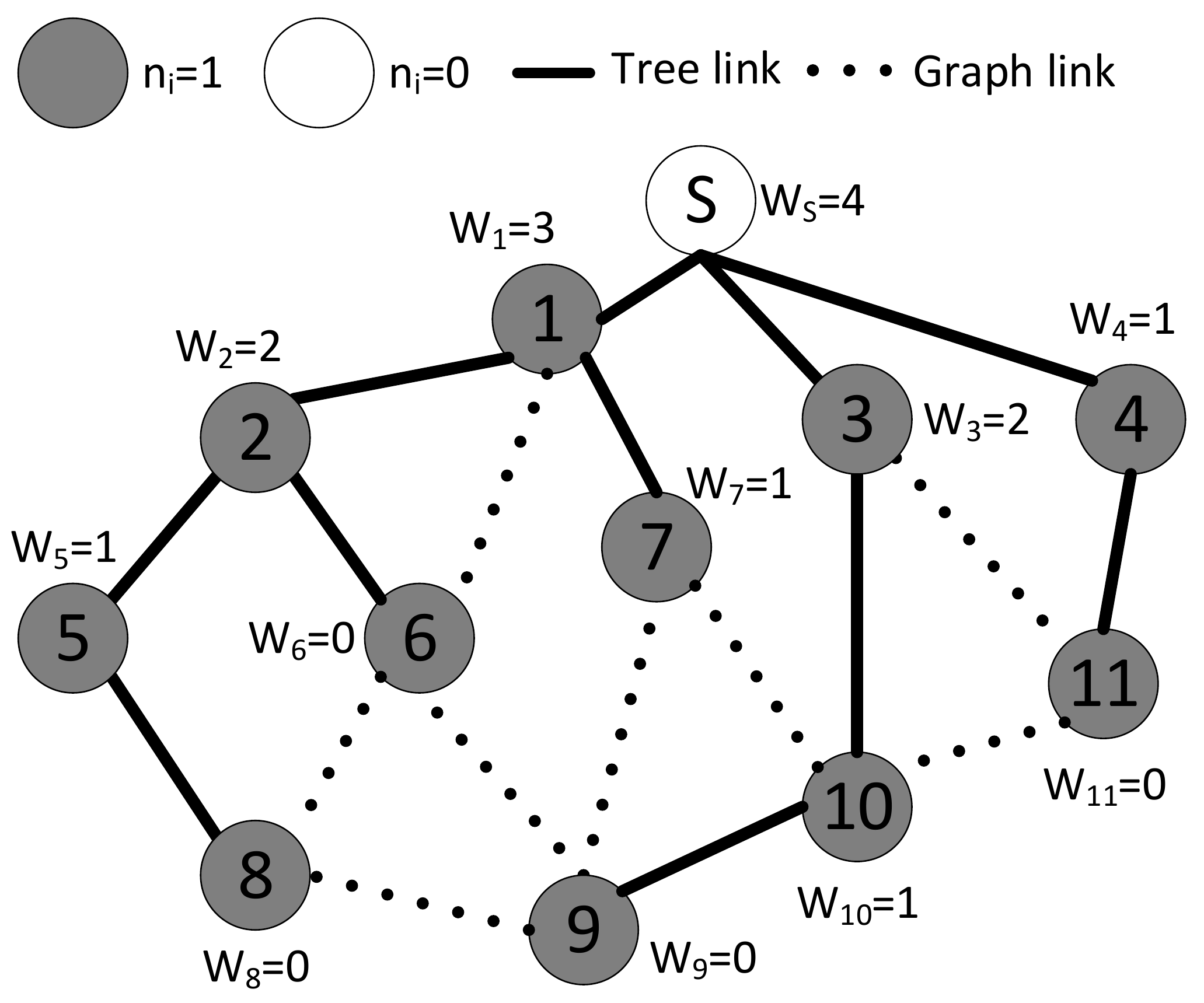}
			\caption{Optimal tree: $QoA=11$}
			\label{fig:Ex2c}
		\end{center}
	\end{subfigure}%
	\caption{Impact of aggregation tree structure on the maximum $QoA$. The waiting times are computed using the optimal scheduling algorithm \cite{Shroff}. } 
	\label{fig:Ex2}
\end{figure*}

\textbf{Example 2. (Investigating the impact of aggregation tree on maximum achievable $QoA$):} Fig. \ref{fig:Ex2} illustrates the maximum achievable $QoA$ of three different data aggregation trees given a fixed underlying WSN topology. Fig.~\ref{fig:Ex2a} is an example of long tree. With sink deadline $4$, at most one node in height $4$ of aggregation tree can participate in data aggregation. That is, just one of the nodes $3$ and $9$, both with height $4$, can participate in the aggregation.  Moreover, the set of nodes $\{5, 6, 8\}$ are in a distance greater than $D$ and there is no way to participate them. In other words, this particular long tree structure already has no way to participate at least $4$ nodes in aggregation process. It turns out that the maximum $QoA$ of tree in Fig. \ref{fig:Ex2a} using the optimal scheduling \cite{Shroff} is $7$. 
Fig. \ref{fig:Ex2b} shows a random tree with the maximum $QoA$ of $9$. Finally, the optimal data aggregation tree is shown in Fig. \ref{fig:Ex2c} where all nodes are participants. The optimal tree in this toy example is obtained by trial and error. We emphasize that finding the optimal aggregation tree is not straightforward even in our tractable topology with only $12$ nodes, while in practice the scale of the network is much larger than that of this example.

\begin{thm}
\label{lemma_tree_importance}
For an imposed deadline $D$ where all nodes are source, the maximum values of $QoA$ in the optimal tree and worst-case tree are $2^D-1$ and $D$, respectively.
\end{thm}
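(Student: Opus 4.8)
The plan is to prove the two extremes separately: a tight upper bound together with a matching construction for the optimal tree, and a direct argument on a line topology for the worst case. \emph{Optimal tree.} The key object is a one‑parameter potential: for $d\in\{1,\dots,D\}$ let $g(d)$ be the largest $QoA$ achievable in a subtree whose root $v$ must deliver its aggregated packet to its parent within $d$ time slots. The engine is the one‑hop interference constraint at $v$: all links incident to $v$ pairwise conflict, so in any feasible schedule $v$ receives from at most one child per slot, and if $v$ transmits in slot $s\le d$ it can only have aggregated packets from children that transmitted in slots $1,\dots,s-1$; a child that transmits in slot $j$ is itself bound to finish by slot $j$ and hence heads a subtree contributing at most $g(j)$. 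Taking $s=d$ (the only choice that can be optimal, since enlarging $s$ only adds feasible children) and adding the contribution $1$ of $v$ itself (all nodes are sources) gives
\begin{equation*}
g(d)\le 1+\sum_{j=1}^{d-1}g(j),\qquad g(1)=1 .
\end{equation*}
A one‑line induction then yields $g(d)=2^{d-1}$, because $1+\sum_{j=1}^{d-1}2^{j-1}=2^{d-1}$. The sink never transmits, but one‑hop interference still lets it collect from at most one child in each of the $D$ available slots, so $QoA_\psi(\vec W)\le\sum_{j=1}^{D}g(j)=\sum_{j=1}^{D}2^{j-1}=2^{D}-1$ for every spanning tree $\psi$ and every waiting‑time assignment $\vec W$.

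\emph{Tightness.} Next I would exhibit the tree that saturates the recursion: the sink has exactly $D$ children, the one ``assigned to slot $j$'' being the root of a subtree with budget $j$; recursively, a budget‑$j$ root has $j-1$ children assigned to slots $1,\dots,j-1$, and a budget‑$1$ node is a leaf. Unrolling the recursion shows this tree has $\sum_{j=1}^{D}2^{j-1}=2^{D}-1$ nodes, and the schedule ``every node transmits in the slot to which it is assigned'' is feasible (distinct slots among siblings, each node finishing one slot before its parent transmits) and makes every node a selected source, so $QoA=2^{D}-1$ on it. This tacitly uses that $\mathcal{G}$ contains such a subtree, which is exactly the regime in which the bound is attained (e.g.\ a sufficiently large dense topology), and one should state that hypothesis explicitly.

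\emph{Worst‑case tree.} For the opposite extreme I would take the line topology $S-v_1-v_2-\cdots-v_n$ with $n\ge D$, whose only spanning tree is the path itself. In a path, if $v_k$ is not a participant then none of $v_{k+1},\dots,v_n$ can reach the sink, so the participants always form a prefix $v_1,\dots,v_m$; and since $v_k$ needs at least $k$ transmissions (one per edge on its unique route) to reach the sink within the deadline, $m\le D$. Conversely, the ``staircase'' schedule in which $v_i$ transmits in slot $D+1-i$ for $i=1,\dots,D$ is feasible ($v_i$ hears $v_{i+1}$ in slot $D-i$, one active link per slot, the sink hearing $v_1$ in slot $D$) and makes all of $v_1,\dots,v_D$ participate, so the best $QoA$ over all schedules on this tree is exactly $D$.

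\emph{Main obstacle.} The only genuinely delicate step is the interference bookkeeping behind the recursion for $g$: one must argue rigorously that, without loss of generality, an optimal tree‑specific schedule can be taken to assign pairwise‑distinct transmission slots to the children of every node (and to the children of the sink), and that the effective ``budget'' of a child equals the slot in which it transmits — equivalently, that the optimal scheduler of \cite{Shroff} cannot beat $1+\sum_{j<d}g(j)$. Once this is pinned down, the induction $g(d)=2^{d-1}$, the node count of the extremal tree, and the prefix argument for the chain are all routine.
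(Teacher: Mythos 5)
Your proof is correct, and its second half (the chain/worst-case argument) is essentially the paper's: the paper asserts the worst case is a chain-like tree rooted at the sink and counts the $D$ participants, while you make the same count slightly more carefully (participants form a prefix, depth at most $D$, staircase schedule attains $D$). The real difference is in the first half: the paper does not prove the $2^D-1$ bound at all --- it simply cites \cite{Alinia} for the bound and remarks that it is attained when the graph is dense enough (e.g.\ complete), with the extremal ``ideal'' tree only described later, in Section~VI. You instead re-derive the bound from first principles via the budget recursion $g(d)\le 1+\sum_{j<d}g(j)$, $g(d)=2^{d-1}$, and exhibit the saturating tree explicitly; this is the canonical argument behind the cited result, so it is not a conceptually different route, but it buys self-containedness and, usefully, makes explicit the hypothesis (the network must contain the ideal tree, i.e.\ be ``dense enough'' / well-structured in the paper's later terminology) that the theorem statement leaves implicit. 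The distinct-slots bookkeeping you flag as the main obstacle is already enforced by the paper's feasibility constraint~\eqref{eq:constraint_DI} (children of a node get pairwise distinct waiting times below the parent's), so no extra work is needed there. One caveat applying equally to you and to the paper: neither argument shows that the chain is actually the \emph{worst} tree, i.e.\ that no spanning tree (with sufficiently many nodes) has maximum $QoA$ below $D$; the paper asserts the worst case is the chain, and you exhibit a topology whose only tree is the chain, so the ``worst-case'' label is justified only up to that unproved minimality claim in both versions.
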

\begin{proof}
It is proved in \cite{Alinia} that $QoA$ is bounded to $2^D-1$ regardless of the aggregation tree structure. The bound is touchable when the network graph is dense enough where an obvious case is a complete graph (for more details, refer to Section \ref{sec:init}). Therefore, we proceed to calculate the upper bound in the worst case. Indeed, the worst case occurs when we construct a chain-like tree with sink as the head of the chain. Observe that for a node $i$, $|H^\psi(i)|$ is equal to the distance of $i$ to the sink in aggregation tree $\psi$. 
In a chain tree, there is only one possible way of scheduling where each node $i$ having the property $|H^\psi(i)|\leq D$ assigned a waiting time of $D-|H^\psi(i)|$ and is a participant. There are $D$ such nodes and therefore the maximum $QoA$ of the tree is $D$. 
\end{proof}

The motivating example 2 and Theorem~\ref{lemma_tree_importance} clearly signify the importance of aggregation tree structure on the final $QoA$. In the next section, we formally formulate the optimal aggregation tree construction as an optimization problem. 

\section{Problem Formulation}
\label{sec_OptimizationProblem}
Having defined the system model notations in Section~\ref{sec:sys}, we proceed to formulate the joint aggregation tree construction and scheduling problem to find the optimal tree as follows:
\bse
\bee
\textsf{Z}:\quad &&\max_{\psi\in \mathcal{T}(\mathcal{G})}\quad  QoA_{\psi}(\vec{W})\label{eq:Objective} \\
&&\text{s.t.}
\quad \forall i \in \{S\}\cup \mathcal{V}\backslash \mathcal{V}^\psi _{\textrm{leaf}} : \forall C \subseteq \{(j,i):(j,i)\in \xi^\psi\},\nonumber \\
&&\quad\label{eq:constraint_DI} \sum_{\scriptsize j:(j,i)\in C}{n_j^\psi \leq  W^\psi _i - M_i^{\psi},}\\
&&\quad \quad W^\psi _i\in \{0,1,\textellipsis , D-1\},\quad \forall i\in \mathcal{V},\label{eq:constrsint_W}\\
&&\quad \quad  W^\psi_S=D,\\
&&\quad \quad  n_i^\psi\in \{0,1\},\quad \forall i \in \mathcal{V}, \label{eq:constraint_n}
\eee
\ese
where $M_i^{\psi} = \min_{\scriptsize j:(j,i)\in C}W^\psi _j$ is the minimum waiting time of the children of node $i$ on tree $\psi$. Constraints~\eqref{eq:constrsint_W}-\eqref{eq:constraint_n} enforce the feasible set of waiting times according to the definitions. The most important constraint is given by Equation~\eqref{eq:constraint_DI} to ensure that the deadline and interference constraints are not violated.
Constraint~\eqref{eq:constraint_DI} states that in a feasible scheduling, the summation of number of participant children of a parent node $i$ and the minimum waiting time of its children should be less than or equal to node $i$'s waiting time, $W_i^{\psi}$. We explain this constraint in detail. Observe that a selected children of $i$ can only be assigned a waiting time of $W_i^{\psi}-1, \dots , 0$ due to deadline constraint in the parent node. Moreover, no two children of $i$ can have a same waiting time otherwise, their simultaneous transmissions will be interfered in the parent node. Therefore, parent $i$ can choose at most $W_i^{\psi}$ children with distinct assigned waiting times chosen from the set $\{W_i^{\psi}-1, \dots , 0\}$. In addition, the value of $M_i^{\psi}$ is not always zero because in some cases number of selected children of $i$ is less than $W_i^{\psi}$. Therefore, transmissions can only occur in time slots $M_i^{\psi}, M_i^{\psi}+1, \dots ,W_i^{\psi}-1$ which gives a total of $W_i^{\psi}-M_i^{\psi}$ time slots. Since in each time slot we have at most one transmission (interference constraint), the total number of selected children cannot exceeds $W^{\psi}_i-M^{\psi}_i$.
\subsection{NP-hardness}
The problem of finding the optimal tree is hard to solve as the number of trees in the network is extremely large in reality. For example, in a complete network graph with $V$ nodes and a sink, the number of feasible trees is $V^{V-2}$. We prove that problem $Z$ is at least as hard as a variant
of classical Maximum Coverage Problem (MCP) called Maximum Coverage Problem with Group Budget
Constraint (MCPG) which is known to be NP-hard \cite{Chekuri}.

\textbf{ Maximum Coverage Problem (MCP). }Given a collection of $n$ sets $U=\{S_1, S_2,\textellipsis , S_n\}$ and a number $l$, the 
goal of MCP is to form set $U^\prime$ by choosing at most $l$ sets from $U$ 
such that the union of selected sets has the maximum cardinality:
\begin{align}
&MCP: \quad \max_{U^\prime} \left| \bigcup_{S_i \in U^\prime} S_i \right|, \quad
 \quad \textrm{s.t.}\quad U^\prime \subseteq U,
\quad \left| U^\prime \right| \leq l. \nonumber
\end{align}

\textbf{ Maximum Coverage Problem with Group Budget Constraint (MCPG). }In \cite{Chekuri}, MCPG is introduced as a general case of MCP.
In MCPG, $n$ sets $S_1,\textellipsis , S_n$ at MCP are partitioned 
to $L$ groups $G_1,\textellipsis , G_L$. MCPG has two 
versions namely cost and cardinality versions where 
the latter is our interest. In the cardinality version of MCPG, given number $l$,
we should select at most $l$ sets from $U$ such that the cardinality of union of the
selected sets is maximized. Moreover, we are permitted to choose at most one set of each group. MCPG is clearly NP-hard because MCP which is known to be NP-hard \cite{Chekuri} is a special case of MCPG where 
each set in $U$ is considered as a group. 
\begin{align}
MCPG:\ \max_{U^\prime} &\left| \bigcup_{S_i \in U^\prime} S_i \right|\\
\textrm{s.t.}\quad &U^\prime \subseteq U, \nonumber\\
&\left| U^\prime \right| \leq l,\nonumber\\
&\left| U^\prime \cap G_i \right| \leq 1, \forall i\in\{1,\textellipsis , L\}. \nonumber
\end{align}
The similarity between our tree construction problem and MCPG is that in both cases 
the objective is to maximize the cardinality. In MCPG we can choose at most one set 
from each group.  Similarly, in problem $Z$, each node can subscribe (cover) different set of sensor nodes based on its deadline and we are allowed to choose at most one set according to the assigned deadline.

\begin{thm}
\label{thm_NP}
Problem $Z$ is NP-hard.
\end{thm}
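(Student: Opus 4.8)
The plan is to prove NP-hardness of problem $\textsf{Z}$ by a polynomial-time reduction from the cardinality version of MCPG, which the excerpt has already recalled is NP-hard. Given an instance of MCPG consisting of sets $S_1,\dots,S_n$ over a ground set $\mathcal{U}$, a partition into groups $G_1,\dots,G_L$, and a budget $l$, I would build a WSN graph $\mathcal{G}$ together with a deadline $D$ so that feasible aggregation trees (with their optimal schedules) correspond to feasible choices in MCPG, and the achieved $QoA$ equals (up to an additive constant that is the same for every solution) the number of covered ground elements. Concretely, I would create one sink $S$, a layer of ``group selector'' nodes, one per group $G_i$, all adjacent to $S$; for each set $S_j$ in group $G_i$ a ``set node'' adjacent to the selector of $G_i$; and for each ground element $u\in\mathcal{U}$ an ``element node'' adjacent to every set node $S_j$ with $u\in S_j$. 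The deadline $D$ is chosen small (a constant, e.g.\ $D=3$ or $4$) so that the one-hop interference plus deadline constraint \eqref{eq:constraint_DI} forces the sink to admit at most $D$ children and, crucially, forces each selector node to forward the subtree of at most one set node — this is exactly the ``at most one set per group'' constraint of MCPG. The budget $l$ is encoded by making only $l$ of the selector nodes reachable within the deadline, or alternatively by padding so that the sink's effective branching is $l$.

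The key steps, in order, are: (1) give the explicit construction of $\mathcal{G}$, the source/relay labels $F_i$, and the deadline $D$ from an arbitrary MCPG instance, and verify it is computable in polynomial time; (2) show that any feasible solution to problem $\textsf{Z}$ on $\mathcal{G}$ — a spanning tree $\psi$ plus a feasible schedule $\vec W^\psi, \vec n^\psi$ satisfying \eqref{eq:constraint_DI}–\eqref{eq:constraint_n} — induces a feasible MCPG solution: the selector nodes with a participating set-node child pick out at most $l$ sets, at most one per group by the interference argument at the selector, and the element nodes that are counted in $QoA_\psi(\vec W)$ are exactly those in the union of the chosen sets; (3) conversely, show that any feasible MCPG solution $U'$ can be realized as a spanning tree and schedule whose $QoA$ equals $|\bigcup_{S_j\in U'}S_j|$ plus the fixed contribution of the structural nodes; (4) conclude that an optimal solution to $\textsf{Z}$ yields an optimal solution to MCPG, so a polynomial-time algorithm for $\textsf{Z}$ would solve MCPG in polynomial time, contradicting its NP-hardness unless $\mathrm{P}=\mathrm{NP}$.

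The main obstacle I anticipate is step (2): making the correspondence tight in the presence of the deadline/interference constraint \eqref{eq:constraint_DI}. One has to be careful that the waiting-time bookkeeping does not let a selector node smuggle in more than one set node's subtree, and that ``double counting'' of a ground element covered by several chosen sets is handled correctly — in the tree each element node has a unique parent among the set nodes, so it is counted once, matching the union semantics, but I must argue that the optimal schedule will indeed attach each element node under some chosen set node whenever possible. A secondary subtlety is handling spanning-tree feasibility: since $\psi$ must span all of $\mathcal{G}$, element nodes corresponding to uncovered elements and set nodes of unchosen sets still have to be hung somewhere in the tree, so I would assign them as leaves (or deep descendants beyond depth $D$) with $n_i^\psi=0$ so that they contribute nothing to $QoA$ and do not interfere; verifying that such a completion always exists and is schedulable is the routine-but-necessary glue. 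Once these are in place, the reduction and the equivalence of optima follow directly, and the theorem is established.
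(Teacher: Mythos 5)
Your overall strategy (reduce from the cardinality version of MCPG) matches the paper's, but your gadget as specified breaks down, and the failure is not just a technicality you can patch in step (2). First, with a constant deadline ($D=3$ or $4$) the one-hop interference model caps the quality of any tree at $2^D-1$ (a constant, as stated in Theorem~\ref{lemma_tree_importance}), so $QoA_\psi(\vec W)$ can never equal ``number of covered ground elements plus a fixed constant'' for MCPG instances whose optimal coverage is large; the objective values simply cannot track each other. Second, even if you let $D$ grow, your layered sink--selector--set--element construction is internally inconsistent: constraint~\eqref{eq:constraint_DI} lets a node with waiting time $w$ admit at most $w$ participating children, so to force ``at most one set node per selector'' by interference alone you need the selector's waiting time to be $1$, but then its unique set-node child has waiting time $0$ and can collect no element nodes; conversely, if you give the selector enough waiting time for a set node to gather all $|S_j|$ elements, nothing stops the selector from admitting several set nodes of the same group, and the extracted MCPG solution violates the group constraint (and overcounts coverage). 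This is exactly the obstacle you flag at the end, but it is not a bookkeeping subtlety --- with your parameters it has no resolution.

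The paper sidesteps this by using a single \emph{relay} node $C_i$ per group, a large deadline $D\geq N$ (with $N$ the total number of sets), and an intra-group wiring such that the \emph{value} of the waiting time assigned to $C_i$ selects which set of $G_i$ is realized as the maximum participant set in the subtree under $C_i$; the one-hop interference at the sink, which forces distinct waiting times among the $C_i$'s, is what couples the groups. In other words, the ``choose at most one set per group'' constraint is encoded in the scheduling dimension (one waiting time per group node), not in the tree-topology dimension (one child per selector) as you attempt. If you want to salvage your construction, you would have to replace the interference argument at the selector with a mechanism of this kind, or otherwise prove that optimal solutions never take two set nodes from one group --- which, as argued above, is false for your gadget.
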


\begin{proof}
To prove, we reduce MCPG to problem $Z$ with a polynomial time algorithm.
To this end, we construct network graph $\mathcal{G}$ such that the sink is
directly connected to $L$ non-source sensor nodes $C_1, \dots , C_L$ where $L$ is the number of groups in MCPG. There are $V$ other sensor nodes all considered as source nodes connected to $C_1, \dots , C_L$ either directly or indirectly where $V$ is equal to the total number of distinct elements in all groups. That is, $V=\sum_{i=1}^{L} {\sum_{j=1}^{|G_i|} {|g_{i,j}|}}$ where $|g_{ij}|$ is the cardinality of $j^{th}$ set in group $i$ and $|G_i|$ is the number of sets in group $i$. 
Then, we set the sink deadline to $D\geq N$ where $N$ is the total number of sets in $L$ groups, i.e., $N=\sum_{i=1}^L\left| G_i\right|$.
We connect $V$ sensor nodes to $C_1, \dots , C_L$ and to each other such that if we assign a deadline of $D-((\sum_{k=1}^{i-1} {|G_k|})+j-1)$ to the sink's neighbor $C_i$,
$j^{th}$ set of $G_i, 1\leq j\leq |G_i|$ denotes the maximum cardinality set of the sensor nodes who will participate in data aggregation as the successors of $C_i$ in a sub-tree rooted at this node in aggregation tree.
An optimal assignment of deadlines to $C_1, \dots , C_L$ is equal to select at most one set from
each group of MCPG where this optimal assignment results in maximizing both the number of participants in
data aggregation tree as well as the number of covered elements in MCPG. Therefore, a polynomial time optimal algorithm of problem $Z$ leads to a polynomial solution of MCPG which completes the proof.
\end{proof}

\section{Markov-Based Approximate Solution}
\label{sec:markov}
Since problem $Z$ is NP-hard, it is not possible to devise a computationally-efficient algorithm for the optimal solution even in a centralized manner. As such, we pursue approximate solutions. Among different approximation methods, we leverage Markov approximation framework \cite{Chen2} to propose an efficient near-optimal solution for the problem. Generally, in this framework the goal is to tackle combinatorial optimization problems in distributed manner so as 1) to construct a class of problem-specific Markov chains with a target steady-state distribution and 2) to investigate a particular structure of Markov chain that is amenable to distributed implementation. 
We first begin with a brief primer of the theoretical approximation framework \cite{Chen2} in the next subsection. 

\subsection{Markov Approximation}
Recall that $\mathcal{T}$ denotes the set of all possible trees (configurations) of the network. For notational convenience, let us define ${\Phi_\psi ^D=\max (QoA_{\psi}(\vec{W}))}$ , i.e., when the network relies on tree $\psi\in \mathcal{T}$ for data aggregation and sink deadline is $D$, the maximum data aggregation quality is $\Phi_\psi ^D$. We note that $\Phi_\psi ^D$ is obtained by running the optimal scheduling algorithm subject to all the constraints in problem $Z$. In other words, all possible trees are \textit{feasible} configurations in our tree construction algorithm and the scheduling policy ensures the feasibility in terms of respecting the interference and deadline constraints in Equation~\eqref{eq:constraint_DI}. Denote $p_{\psi}$ as the percentage of time that configuration $\psi$ is employed to accomplish data aggregation. 
Using these notations we can rewrite problem $Z$ as follows:
 \begin{align*}
 Z^{\mathbf{eq}}: \max_{\{p_{\psi} \geq 0, \psi \in \mathcal{T}\}} \quad \sum_{\psi \in \mathcal{T}} p_{\psi} \Phi_\psi ^D,\quad
  \textrm{s.t.} \quad \sum_{\psi \in \mathcal{T}} p_{\psi} = 1.
 \end{align*}
To derive a closed-form of the optimal solution of problem $Z^{\mathbf{eq}}$ and to open new design space for exploring a distributed algorithm, we formulate problem $Z^\beta$ as an approximate version of $Z^{\mathbf{eq}}$ using \textit{log-sum-exp} approximation \cite{Chen2}
\begin{align}
\label{eq:z-beta}
Z^\beta:\max_{\{p_{\psi} \geq 0, \psi \in \mathcal{T}\}} &\sum_{\psi\in\mathcal{T}}p_\psi \Phi_\psi ^D - \frac{1}{\beta}\sum_{\psi\in\mathcal{T}}p_\psi\log p_\psi
\\ \textrm{s.t.}\quad \quad \quad &\sum_{\psi\in\mathcal{T}}p_\psi=1,
\end{align}
 where $\beta$ is a large enough positive constant that controls the accuracy of the approximation. Problem $Z^\beta$ is an approximated version of problem $Z$ off by an entropy term $-\frac{1}{\beta}\sum_{\psi\in\mathcal{T}}p_\psi\log p_\psi$ and it is a convex optimization problem so by solving KKT conditions, its optimal solution is obtained by 
\begin{align}
\label{eq:p_star}
p_\psi^\ast =\frac{\exp\big(\beta \Phi_\psi ^D\big)}{\sum_{\psi^\prime\in\mathcal{T}} \exp\big(\beta \Phi_{\psi^\prime} ^D\big)},\quad \psi\in\mathcal{T}.
\end{align}
Moreover, the optimal value is 
\begin{align}
\label{eq:opt_obj}
\widehat{\Phi}_\psi ^D=-\frac{1}{\beta} \log \Bigg( \sum_{\psi\in\mathcal{T}} \exp\big(\beta \Phi_{\psi} ^D\big)\Bigg).
\end{align}
Finally, the approximation gap is characterized as:
\begin{align}
\label{eq:opt_gap}
|\max_{\psi\in \mathcal{T}} \Phi_\psi ^D - \widehat{\Phi}_\psi ^D|\leq \frac{1}{\beta}\log |\mathcal{T}|,
\end{align}
where the approximation gap approaches to zero as $\beta$ approaches to infinity. This means that with larger values of $\beta$ the approximation model is more accurate. 

 In the next step, our endeavor is to obtain the solution of problem $Z^\beta$ by time-sharing among different tree configurations according to $p_\psi^\ast$ in Eq. \eqref{eq:p_star}. According to the basic framework, the key is to investigate a well-structured and distributed-friendly Markov chain whose stationary distribution is $p_\psi^\ast$.

\subsection{Markov Chain Design}
We design a time-reversible Markov chain with states space being $\mathcal{T}$ and the stationary distribution being $p_\psi^\ast$. 
Then, we use this Markov chain structure to hop (migrate) among different states (trees) such that a tree with high $QoA$ has more chances to be visited by Markov random walks. The problem is solved when the Markov chain converges to the ideal steady-state distribution.

Given the Markov chain state space, the next step is to construct the transition rate between two states. Let $\psi,\psi' \in \mathcal{T}$ be two states of Markov chain and $q_{\psi,\psi'}$ be the transition rate from $\psi$ to $\psi'$. 
Herein, the theoretical framework enriches us by two degrees of freedom. It turns out that the key in designing distributed algorithms is to design a Markov chain such that (i) the Markov chain is irreducible (i.e., any two states are reachable from each other) and (ii) the detailed balance equation is satisfied (i.e., $p_{\psi}^\ast q_{\psi,\psi^\prime}=p_{\psi^\prime}^\ast q_{\psi^\prime,\psi}, \forall \psi,\psi^\prime\in\mathcal{T}$). Consequently, we are allowed to set the transition rates between any two states to be zero if they are still reachable from any other states. We refer the reader to \cite{Chen2} for further explanation.

In practice, however, direct transition between two states means migration between two tree structures. To derive a distributed algorithm, we only allow direct transitions between two states if the current and the target trees can be transformed to each other by only one parent changing operation in one of the trees. Namely, two states $\psi$ and $\psi^\prime$ are directly reachable from each other if we can construct tree $\psi^\prime$ by deleting
an edge $(i,j)\in\xi$ from $\psi$ and adding edge $(i,k)\in\xi$ to $\psi$. 
Using this transition structure the next step is to set the transition rate as follows: 
\begin{align}
\label{eq:trans_rate}
q_{\psi,\psi^\prime}=\frac{1}{\exp (\alpha)}\frac{\exp (\beta\Phi_{\psi'}^D)}{\exp (\beta\Phi_{\psi}^D)+\exp (\beta\Phi_{\psi'}^D)}
\end{align}
where $\alpha \geq 0$ is a constant and $q_{\psi^\prime ,\psi}$ is defined symmetrically.

\subsection{Algorithm Design}
Our goal is to realize a distributed implementation of the Markov chain proposed in the previous section. In this part, we detail our implementation.

To compute transition rate between the states, the maximum $QoA$ of both the current ($\Phi_\psi ^D$) and the target ($\Phi_{\psi^\prime}^D$) states are required. To calculate these values we employ the scheduling algorithm proposed in [10], namely ``Waiting-Assignment''  algorithm. ``Waiting-Assignment''  is a distributed polynomial time algorithm  to find the optimal waiting time of the nodes and hence $\Phi_\psi ^D$ in  tree $\psi$.

Our algorithm runs as follows.
Given initial aggregation tree $\psi$ and deadline $D$, we first run ``Waiting-Assignment'' algorithm to obtain $\Phi_\psi ^D$. Then, based on the underlying Markov chain design and in an iterative manner, we proceed to migrate to a target aggregation tree $\psi^\prime$ with (probably) better $\Phi_{\psi^\prime}^D$ than $\Phi_{\psi}^D$.
To realize this end, each sensor node individually runs ``Parent-Changing'' algorithm which is summarized as Algorithm 1. 
%

\vspace{2mm}
\begin{algorithm}
\caption{``Parent-Changing'' algorithm for node $i\in\mathcal{V}$}
\label{Parent-Changing}
 \DontPrintSemicolon 
\KwIn{$\alpha ,\beta$}
\KwOut{New parent of node $i$}
\BlankLine

$P_i \leftarrow$ parent of node $i$

$\mathcal{N}_{\geq i} \leftarrow \{j: (i,j)\in \xi, W_j\geq W_i\}$ 

Node $i$ generates a timer $\tau_i\thicksim\exp (\lambda _i)$ with mean 
$\lambda_i =\frac{1}{|\mathcal{N}_{\geq i}|}$ and starts to count down

When $\tau_i$ expires, node $i$ randomly selects one of its neighbors
$P_i^\prime\in \mathcal{N}_{\geq i}$.

$\Phi _{\textrm{prev}} \leftarrow $ node $i$'s estimation of $\Phi_{\psi}^D$ in Equation (\ref{eq:trans_rate}), i.e., the maximum $QoA$ of the  current tree

Node $i$ changes its parent to $P_i^\prime$

$\Phi _{\textrm{next}} \leftarrow $ node $i$'s estimation of $\Phi_{\psi'}^D$ in Equation (\ref{eq:trans_rate}), i.e., the maximum $QoA$ of the new tree



With probability $q_{\psi ,\psi'}$, node $i$ keeps the new tree configuration and with probability $1-q_{\psi ,\psi'}$ switches back and connects to the previous parent $P_i$

\If {$i$ changed its parent in Step 8}
{

$P_i^\prime$ invokes ``Waiting-Assignment'' algorithm on its sub-tree

$P_i$ invokes ``Waiting-Assignment'' algorithm on its sub-tree
}

Node $i$ refreshes the timer and begins counting down 
\end{algorithm}
The detailed description of Algorithm 1 is as follows. In Line 3, an exponentially distributed random number with mean $\lambda_i =\frac{1}{|\mathcal{N}_{\geq i}|}$ is generated as the timer value in which this setting is required to ensure the convergence of the corresponding Markov chain. In Line 4,
node $i$ selects a new parent $P^\prime _i$ such that $W_{P^\prime _i}\geq W_i$. This ensures that after the parent changing, the data structure still remains a tree since the new structure is not a tree only if node $i$ chooses its new parent from its successors where all have a less waiting time than node $i$'s waiting time.
Meanwhile, this strategy is also rational because finding a new parent with a shorter waiting time declines node $i$'s new waiting time which probably reduces $QoA$. In Lines 5-7, node $i$ temporarily changes its parent and estimates the impact of this change on the maximum $QoA$ of data aggregation. Based on the estimation and transition rate given by Equation (\ref{eq:trans_rate}), in Line 8, node $i$ decides whether to keep its new parent or not.  If the new state is established, then nodes $P_i$ and $P_i^\prime$ should run ``Waiting-Assignment'' algorithm to update waiting time of their successors because of their sub-tree changes. It is worthy to note that the parameter $\beta$ not only affects the accuracy of the approximation, but also with large values of $\beta$, the algorithm migrates towards better configurations more greedily, whereas it may lead to premature convergence and trap into local optimum trees.
\begin{propos}
``Parent-Changing'' algorithm in fact implements a time reversible Markov chain
with stationary distribution in Equation (\ref{eq:p_star}).
\end{propos}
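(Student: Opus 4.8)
The plan is to show that the concurrent execution of Algorithm~\ref{Parent-Changing} by all sensor nodes realizes (in the continuous-time limit) a CTMC on the state space $\mathcal{T}$ whose transition rates are exactly the $q_{\psi,\psi'}$ of \eqref{eq:trans_rate}, and then to invoke the standard characterization of finite reversible Markov chains (cf.\ \cite{Chen2}) to identify $p_\psi^\ast$ as its unique stationary distribution. The first step is to compute the induced rate. Since the nodes hold independent exponential timers, from any state $\psi$ the next parent change is triggered by a single node $i$, and the resulting tree $\psi'$ differs from $\psi$ only in the edge incident to $i$; conversely, comparing edge sets shows that each neighbour $\psi'$ of $\psi$ is produced by a unique (firing node, chosen neighbour) pair. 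Hence the rate of the transition $\psi\to\psi'$ is the product of (a) the rate $|\mathcal{N}_{\geq i}^\psi|$ at which $i$'s timer expires (its mean being $1/|\mathcal{N}_{\geq i}^\psi|$, Line~3), (b) the probability $1/|\mathcal{N}_{\geq i}^\psi|$ that $i$ selects the particular neighbour $P_i'$ leading to $\psi'$ (Line~4), and (c) the acceptance probability $q_{\psi,\psi'}$ of Line~8. Factors (a) and (b) cancel — this is exactly what the timer mean in Line~3 is chosen for — so the induced rate equals $q_{\psi,\psi'}$, and it is in particular insensitive to the fact that $\mathcal{N}_{\geq i}$ itself depends on the current tree.

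It then remains to check that this chain lives on $\mathcal{T}$ and satisfies the two structural hypotheses of the framework. For the first point, Line~4 confines $i$ to neighbours with $W_{P_i'}\geq W_i$; recalling (as noted just after the algorithm) that every successor of $i$ in $\psi$ has waiting time strictly less than $W_i$, such a $P_i'$ is never a successor of $i$, so reattaching $i$ to $P_i'$ creates no cycle and the configuration remains a spanning tree — and the ``Waiting-Assignment'' reruns in Lines~9--12 only recompute waiting times deterministically without altering the tree, so the tree indeed evolves as a CTMC. For detailed balance, substituting \eqref{eq:p_star} and \eqref{eq:trans_rate},
\begin{align*}
p_\psi^\ast q_{\psi,\psi'} = \frac{1}{\exp(\alpha)\sum_{\psi''\in\mathcal{T}}\exp(\beta\Phi_{\psi''}^D)}\cdot\frac{\exp(\beta\Phi_\psi^D)\exp(\beta\Phi_{\psi'}^D)}{\exp(\beta\Phi_\psi^D)+\exp(\beta\Phi_{\psi'}^D)},
\end{align*}
which is symmetric in $\psi$ and $\psi'$ and therefore equals $p_{\psi'}^\ast q_{\psi',\psi}$, establishing the detailed balance equation $p_\psi^\ast q_{\psi,\psi'}=p_{\psi'}^\ast q_{\psi',\psi}$ for every pair of adjacent states (and trivially for non-adjacent ones, both sides being zero).

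For irreducibility I would exhibit, for arbitrary $\psi,\psi'\in\mathcal{T}$, a finite sequence of admissible single parent-change moves joining them: reroute $\psi$ toward a fixed canonical tree (e.g.\ a breadth-first-search tree of $\mathcal{G}$, moving nodes in increasing order of their target depth so that at each step the intended new parent carries the larger waiting time and the $W_{P_i'}\geq W_i$ rule is met), then run the analogous sequence in reverse to reach $\psi'$; since the adjacency relation on states is symmetric, this also supplies the return path, so every state communicates with every other. With the chain irreducible on the finite set $\mathcal{T}$ and in detailed balance with $p^\ast$, the standard characterization of reversible Markov chains then yields that $p_\psi^\ast$ is the (unique) stationary distribution and that the chain is time-reversible, which is the assertion. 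I expect the irreducibility step to be the delicate one: one must make sure that the coupling between a node's admissible moves and the tree-dependent waiting times never disconnects $\mathcal{T}$; by contrast the rate computation and the detailed-balance verification are essentially bookkeeping once the CTMC is correctly set up.
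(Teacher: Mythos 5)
Your proposal is correct and follows essentially the same route as the paper's own proof: you show the timer rate $|\mathcal{N}_{\geq i}|$ and the uniform selection probability $1/|\mathcal{N}_{\geq i}|$ cancel so that the effective transition rate equals $q_{\psi,\psi'}$ of Equation~(\ref{eq:trans_rate}), then verify detailed balance against $p^\ast$ and invoke irreducibility on the finite state space $\mathcal{T}$. The only differences are presentational: you write out the detailed-balance substitution and sketch an explicit irreducibility path, which the paper merely asserts.
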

\begin{proof}
The designed Markov chain is finite space state ergodic Markov chain where each tree configuration in state space is reachable from any other state by one or more 
parent changing process. We proceed to prove that the stationary state of designed 
Markov chain is Equation (\ref{eq:p_star}).
Let $\psi\rightarrow\psi'$ denote transition from state $\psi$ to $\psi'$ at a timer expiration
and $A=\frac{1}{\exp (\alpha)}\frac{\exp (\beta\Phi_{\psi'}^D)}{\exp (\beta\Phi_{\psi}^D)+\exp (\beta\Phi_{\psi'}^D)}$.
Moreover, $\textrm{Pr}(\psi\rightarrow\psi')$ is the probability of this transition.

This probability can be calculated as follows:
\begin{align}
\textrm{Pr}(\psi\rightarrow\psi')&=\textrm{Pr}(i\textrm{\ chooses } P'|i\textrm{'s timer expires}).\textrm{Pr}(i\textrm{'s timer expires})\nonumber\\ 
&=\frac{1}{|\mathcal{N}_{\geq i}|}.A
.\frac{|\mathcal{N}_{\geq i}|}{\sum_{j\in \mathcal{V}}|\mathcal{N}_{\geq j}|}=\frac{1}{\sum_{j\in \mathcal{V}}|\mathcal{N}_{\geq j}|}.A
\end{align}
In the algorithm, node $i$ counts down with rate $|\mathcal{N}_{\geq i}|$. Therefore, the rate of leaving state $\psi$ is 
${
\sum_{j\in\mathcal{V}}|\mathcal{N}_{\geq j}|.
}$
We can calculate transition rate $q_{\psi ,\psi'}$ as follows:
\begin{align}
q_{\psi ,\psi'}
=\sum_{j\in \mathcal{V}}|\mathcal{N}_{\geq j}|.\frac{1}{\sum_{j\in \mathcal{V}}|\mathcal{N}_{\geq j}|}.A=A\
\vspace{-3mm}
\end{align}
We can see that $p^\ast _\psi.q_{\psi ,\psi'}=p^\ast _{\psi'}.q_{\psi' ,\psi}$. Therefore, the detailed balance equation holds and the stationary distribution of constructed Markov chain is Equation (\ref{eq:p_star}) \cite{Zhang}.
\end{proof}

``Parent-Changing'' algorithm is distributed if we can estimate $\Phi _{\textrm{next}}$ and $\Phi _{\textrm{prev}}$ in the algorithm in a distributed manner.
By exact calculation of these values, the designed Markov chain will converge to 
stationary distribution in Equation (\ref{eq:p_star}). Hence, ``Parent-Changing'' algorithm can give us
a near-optimal solution of problem $Z$. However, exact calculation of $\Phi _{\textrm{next}}$ and $\Phi_{\textrm{prev}}$ is not possible in nodes locally since they can only be calculated in the sink by running ``Waiting-Assignment'' algorithm. Therefore, we need to estimate their values. We estimate the values by two different methods.

\textbf{ Approx-1: First method of estimating $\Phi _{\textrm{next}}$ and $\Phi _{\textrm{prev}}$.} When node $i$ wants to modify its parent from $P_i$ to $P_i^\prime$ (and subsequently tree $\psi$ to $\psi'$), one possible way of estimation is running ``Waiting-Assignment'' algorithm by nodes $P_i$ and $P_i^\prime$ on their sub-trees.
Let $\Phi_{\textrm{prev}}[s]$ and $\Phi_{\textrm{next}}[s]$ denote the maximum achievable $QoA$s in a sub-tree rooted at node $s$ before and after the sub-tree change, respectively. Then, we have the following estimation: 
\begin{align}
\label{eq:QoAestimate1}
\Phi _{\textrm{next}}\thickapprox (\Phi_{\textrm{next}}[P_i]+\Phi_{\textrm{next}}[P'_i]),\ \ \Phi _{\textrm{prev}}\thickapprox (\Phi_{\textrm{prev}}[P_i]+\Phi_{\textrm{prev}}[P'_i]).
\end{align} 

When node $i$ changes its parent from $P_i$ to $P_i^\prime$,
only sub-trees rooted at $P_i$ and $P_i^\prime$ change and all other parts of the tree
remain intact and so the estimation accuracy is expected to be high. This estimation comes with the overhead of running ``Waiting-Assignment'' algorithm at nodes $P_i$ and $P_i^\prime$ to calculate $\Phi_{\textrm{next}}$ and $\Phi_{\textrm{prev}}$.

\textbf{Approx-2: Second method of estimating $\Phi _{\textrm{next}}$ and $\Phi _{\textrm{prev}}$.} Another way of estimation is just using waiting times of nodes $P_i^\prime$ and $i$: 
\begin{align}
\label{eq:QoAestimate3}
&\Phi _{\textrm{next}}\thickapprox W_{P_i^\prime},\ \ \Phi _{\textrm{prev}}\thickapprox W_{i}.
\vspace{-3mm}
\end{align}

A larger value of $W_{P_i^\prime}$ indicates that node $i$ \emph{probably} will be 
assigned a greater waiting time if it joins to sub-tree of $P_i^\prime$ and vice versa.
In Section VII, we evaluate the efficiency of both mentioned methods by simulation. 
%
%
%

\subsection{Perturbation Analysis}
In ``Parent-Changing'' algorithm, if we obtain the accurate value of $\Phi_\psi ^D$ to calculate transition rates, the designed Markov chain converges to the stationary distribution given by Equation (\ref{eq:p_star}). Thus, we have a near-optimal solution of problem $Z$ with optimality gap determined in Equation (\ref{eq:opt_gap}).
In distributed fashion, however, we estimate the optimal tree-specific $QoA$s by Equations ~\eqref{eq:QoAestimate1} and \eqref{eq:QoAestimate3}. 
Consequently, the designed Markov chain may not converge to the stationary distribution in Equation (\ref{eq:p_star}). Fortunately, our employed theoretical approach can provide a bound on the optimality gap due to the perturbation errors of the inaccurate estimation using a quantization error model.

We assume that in a tree configuration $\psi$, the corresponding perturbation error is bounded to $[-\Delta _{\psi},\Delta _{\psi}]$. In order to simplify the approach, we further assume that $\Phi_\psi ^D$ takes only one of the following $2n_{\psi}+1$ values:

\begin{align}
 [\Phi_\psi ^D-\Delta _{\psi},\dots ,\Phi_\psi ^D
-\frac{1}{n_{\psi}}\Delta _{\psi},\Phi_\psi ^D, \Phi_\psi ^D
+\frac{1}{n_{\psi}}\Delta _{\psi},\dots ,\Phi_\psi ^D+\Delta _{\psi}],
\end{align}

where $n_{\psi}$ is a positive constant. Moreover, with probability 
$\eta_{j,\psi}$, the maximum quality of aggregation is 
equal to $\Phi_\psi ^D+\frac{j}{n_{\psi}}\Delta _{\psi}, \forall j\in\{-n_{\psi},\dots ,n_{\psi}\}$ 
and $\sum_{j=-n_{\psi}}^{n_{\psi}} \eta_{j,\psi}=1$.


Let $\tilde{p}$ denote the stationary distribution of the \emph{states} in the \textit{perturbed} Markov chain \cite{Zhang}. We also denote stationary distribution of the \emph{configurations} in the original and perturbed Markov chains by $p^\ast:\{p^\ast _\psi ,\psi\in\mathcal{T}\}$ and $\bar{p}:\{\bar{p}_\psi ,\psi\in\mathcal{T}\}$, respectively. Then, we have \cite{Zhang}
\begin{align}
\tilde{p}\triangleq[\tilde{p}_{\psi,\Phi_\psi ^D +\frac{j}{n_\psi}\Delta _\psi},j\in\{-n_{\psi},\dots ,n_{\psi}\},\psi\in\mathcal{T}],\\
\bar{p}_\psi (\Phi)=\sum_{j\in\{-n_{\psi},\dots ,n_{\psi}\}} \tilde{p}_{\psi,\Phi_\psi ^D +\frac{j}{n_\psi}\Delta _\psi},\forall \psi\in\mathcal{T}.
\end{align}
Using total variance distance \cite{Diaconis} we can measure the distance of $p^\ast _\psi$ and $\bar{p}_\psi$ as 
\begin{align}
d_{TV}(p^\ast ,\bar{p})\triangleq \frac{1}{2}\sum_{\psi\in\mathcal{T}}|p^\ast _\psi -\bar{p}_\psi|.
\end{align}
\begin{thm}
\label{thm:perturb}
\emph{a)} The total variance distance between $p^\ast _\psi$ and $\bar{p}_\psi$ is bounded by $[0,1-\exp (-2\beta \Delta _{\textrm{max}})]$ where $\Delta _{\textrm{max}}=\max _{\psi\in\mathcal{T}}\Delta _\psi$.
\emph{b)} By defining $\Phi_{\max} = \max_{\psi \in \mathcal{T}} \Phi_\psi^D$, the optimality gap in $|p^\ast - \bar{p}|$ is 
\begin{align}
\label{eq:pret-bound}
|p^\ast - \bar{p}|\leq 2\Phi_{\max}(1-\exp (-2\beta\Delta _{\textrm{max}})).
\end{align}
\end{thm}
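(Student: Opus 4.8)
The plan is to write down the stationary distribution of the \emph{perturbed} Markov chain in closed form, reduce part~a) to an elementary ratio estimate combined with a standard total‑variation inequality, and then obtain part~b) as a one‑line consequence.
\emph{Step~1: stationary law of the augmented chain.} The perturbed chain lives on the augmented state space $\{(\psi,\Phi_\psi^D+\tfrac{j}{n_\psi}\Delta_\psi):\psi\in\mathcal{T},\,j\in\{-n_\psi,\dots,n_\psi\}\}$ and evolves as ``Parent-Changing'': a node wakes up with the same configuration‑symmetric rates used in the proof of Proposition~1, proposes a neighbouring tree $\psi'$, the perceived quality of $\psi'$ is realized as $\Phi_{\psi'}^D+\tfrac{k}{n_{\psi'}}\Delta_{\psi'}$ with probability $\eta_{k,\psi'}$, and the move is accepted with the logit rule of Equation~\eqref{eq:trans_rate} evaluated at the two \emph{perceived} qualities. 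I would first observe that, because the configuration‑to‑configuration proposal rates are symmetric (as in Proposition~1) and the transition additionally carries the factor $\eta_{k,\psi'}$ for the newly sampled perceived value, the detailed‑balance equation for neighbouring augmented states $(\psi,a)$ and $(\psi',b)$ reads $\tilde p_{\psi,a}\,\eta_{b,\psi'}\exp(\beta b)=\tilde p_{\psi',b}\,\eta_{a,\psi}\exp(\beta a)$. This is solved by the product form
\begin{align*}
\tilde p_{\psi,\,\Phi_\psi^D+\frac{j}{n_\psi}\Delta_\psi}=\frac{\eta_{j,\psi}\exp\!\big(\beta(\Phi_\psi^D+\tfrac{j}{n_\psi}\Delta_\psi)\big)}{\sum_{\psi'\in\mathcal{T}}\sum_{k=-n_{\psi'}}^{n_{\psi'}}\eta_{k,\psi'}\exp\!\big(\beta(\Phi_{\psi'}^D+\tfrac{k}{n_{\psi'}}\Delta_{\psi'})\big)},
\end{align*}
which, together with irreducibility inherited from the original chain, is the stationary distribution of the perturbed chain. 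Marginalizing over $j$ gives $\bar p_\psi=w_\psi/\sum_{\psi'}w_{\psi'}$ with weights $w_\psi:=\sum_{j=-n_\psi}^{n_\psi}\eta_{j,\psi}\exp(\beta(\Phi_\psi^D+\tfrac{j}{n_\psi}\Delta_\psi))$.

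\emph{Step~2: ratio estimate and part~a).} Since $\tfrac{j}{n_\psi}\Delta_\psi\in[-\Delta_\psi,\Delta_\psi]\subseteq[-\Delta_{\textrm{max}},\Delta_{\textrm{max}}]$ and $\sum_j\eta_{j,\psi}=1$, each weight obeys $\exp(-\beta\Delta_{\textrm{max}})\exp(\beta\Phi_\psi^D)\le w_\psi\le\exp(\beta\Delta_{\textrm{max}})\exp(\beta\Phi_\psi^D)$. Writing $\bar p_\psi/p_\psi^\ast=\big(w_\psi/\exp(\beta\Phi_\psi^D)\big)\cdot\big(\sum_{\psi'}\exp(\beta\Phi_{\psi'}^D)\big/\sum_{\psi'}w_{\psi'}\big)$ and applying the lower bound in the numerator and the upper bound in the normalizing sum yields $\bar p_\psi\ge\exp(-2\beta\Delta_{\textrm{max}})\,p_\psi^\ast$ for every $\psi\in\mathcal{T}$. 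Using $d_{TV}(p^\ast,\bar p)=\sum_{\psi\in\mathcal{T}}\max(p_\psi^\ast-\bar p_\psi,0)$ we then get
\begin{align*}
d_{TV}(p^\ast,\bar p)\le\sum_{\psi\in\mathcal{T}}\big(1-\exp(-2\beta\Delta_{\textrm{max}})\big)p_\psi^\ast=1-\exp(-2\beta\Delta_{\textrm{max}}),
\end{align*}
while the lower endpoint $0$ is just nonnegativity of the total‑variation distance; this proves a).

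\emph{Step~3: part~b).} Interpreting $|p^\ast-\bar p|$ as the gap between the expected aggregation qualities $\sum_{\psi\in\mathcal{T}}p_\psi\Phi_\psi^D$ evaluated at $p^\ast$ and at $\bar p$, linearity in $p$ together with $\sum_{\psi\in\mathcal{T}}|p_\psi^\ast-\bar p_\psi|=2\,d_{TV}(p^\ast,\bar p)$ gives
\begin{align*}
|p^\ast-\bar p|=\Big|\sum_{\psi\in\mathcal{T}}(p_\psi^\ast-\bar p_\psi)\Phi_\psi^D\Big|\le\Phi_{\max}\sum_{\psi\in\mathcal{T}}|p_\psi^\ast-\bar p_\psi|\le 2\Phi_{\max}\big(1-\exp(-2\beta\Delta_{\textrm{max}})\big),
\end{align*}
which is Equation~\eqref{eq:pret-bound}.

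\emph{Main obstacle.} The only step that is not bookkeeping is Step~1: one must argue carefully that lifting ``Parent-Changing'' to the augmented state space preserves the symmetry of the configuration‑to‑configuration proposal rates, so that inserting the sampling probabilities $\eta_{k,\psi'}$ still produces a reversible chain, and then check that the displayed product form satisfies the augmented detailed‑balance equations (the computation mirrors that in the proof of Proposition~1). A secondary point is to confirm that the finite quantization model with weights $\eta_{j,\psi}$ is expressive enough to subsume the estimators Approx-1 and Approx-2; once the weights are fixed, Steps~2 and~3 are elementary.
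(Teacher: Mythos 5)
Your proof is correct, but note that the paper itself does not prove Theorem~\ref{thm:perturb} at all: it simply defers to \cite{Zhang}, so what you have written is a self-contained reconstruction of the argument that the paper outsources. Your route is the standard one from that reference: lift the chain to the augmented space of (configuration, perceived quality) pairs, verify that the product form $\tilde p_{\psi,a}\propto\eta_{\cdot,\psi}\exp(\beta a)$ satisfies the lifted detailed-balance equations induced by the logit acceptance rule of Equation~\eqref{eq:trans_rate}, marginalize to get $\bar p_\psi\propto w_\psi$, bound $w_\psi$ between $e^{\mp\beta\Delta_{\textrm{max}}}e^{\beta\Phi_\psi^D}$ to get $\bar p_\psi\geq e^{-2\beta\Delta_{\textrm{max}}}p_\psi^\ast$, and conclude via $d_{TV}(p^\ast,\bar p)=\sum_\psi(p_\psi^\ast-\bar p_\psi)^+$; part b) then follows by reading $|p^\ast-\bar p|$ as the gap in expected $QoA$ and using $\sum_\psi|p_\psi^\ast-\bar p_\psi|=2d_{TV}$. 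That last interpretive step is worth making explicit, since the theorem statement's notation $|p^\ast-\bar p|$ is ambiguous as written, and your reading (difference of $\sum_\psi p_\psi\Phi_\psi^D$ under the two distributions) is the one under which the claimed bound $2\Phi_{\max}(1-e^{-2\beta\Delta_{\textrm{max}}})$ in Equation~\eqref{eq:pret-bound} is the natural consequence. The one modeling caveat you correctly flag --- that the current state's perceived quality is frozen as part of the lifted state while only the target's is freshly sampled, and that the configuration-level proposal rates are treated as symmetric --- is exactly the assumption made in the cited framework and in the paper's own Proposition~1, so it is not a gap you introduced; within that model your detailed-balance check and the two elementary estimates are sound. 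The net effect is that your write-up supplies a proof the paper lacks, at the same level of rigor as its Proposition~1.
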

For proof and remarks, we refer to \cite{Zhang}.

Finally, we highlight that the convergence (mixing time) of the algorithm based on Markov approximation framework is studied in \cite{Zhang} and \cite{Shao2013}. Overall, this framework, in its basic setting suffers from slow rate of convergence. To mitigate this, a promising solution is to find a ``good'' initial aggregation tree as the input for the iterative Alg.~\ref{Parent-Changing}, thereby its convergence time can be improved. In this regard, the goal in the next section is to develop an algorithm to construct a fast initial aggregation tree.

\section{Initial Tree Construction Algorithm}
\label{sec:init}

We proceed to develop an initial tree construction algorithm to identify a good starting feasible solution for bootstrapping the Markov approximation-based algorithms. The intuition is that if Alg.~\ref{Parent-Changing} starts from a tree which already has a good quality, not only high-quality data aggregation experience can be provided starting from the beginning, but also fast convergence of the algorithm proposed in the previous section can be achieved. Alg. \ref{InitialTree} outputs a near optimal feasible aggregation tree which is a spanning tree over the underlying WSN topology.

\vspace{4mm}
\begin{algorithm}
\caption{``FastInitTree''}
\label{InitialTree}
\DontPrintSemicolon
\KwIn{Graph $\mathcal{G}=\{\mathcal{V}\cup \{S\},\mathcal{E}\}$, Deadline $D$}
\KwOut{Close-to-optimal spanning tree}
\BlankLine

Define $\mathcal{V}_{\textsf{done}}$ as the set of nodes that their parent in final tree is identified

$\mathcal{V}_{\textsf{done}}\leftarrow \emptyset$

\textbf{ExtendTree}($\mathcal{G}$, $S$, $D$)

For any node not in $\mathcal{V}_{\textsf{done}}$ assign it to one of its neighbors in $\mathcal{V}_{\textsf{done}}$ with the least number of children
\end{algorithm}

\vspace{6mm}

\begin{algorithm}
\caption{``Extend-Tree''} \label{ExtendTree}
\DontPrintSemicolon
\KwIn{Graph $\mathcal{G}=\{\mathcal{V}\cup \{S\},\mathcal{E}\}$, Parent $P$, Deadline $D$}
\KwOut{A tree rooted at parent $P$}
\BlankLine

Define $\mathcal{V}_{\textsf{done}}$ as set of nodes that assigned to a parent (initially, $\mathcal{V}_{\textsf{done}}=\emptyset$)

$\mathcal{V}_{\textsf{done}}\leftarrow \mathcal{V}_{\textsf{done}} \cup P$

$\mathcal{V}_{\textsf{curr}}\leftarrow$ neighbors of $P$ except those in $\mathcal{V}_{\textsf{done}}$

$k\leftarrow |\mathcal{V}_{\textsf{curr}}|$ 

$power_i\leftarrow$ number of neighbors of $i$ in $\mathcal{V}\backslash \mathcal{V}_{\textsf{done}}$

w.l.g assume that $c_1,\dots ,c_{k}$ are the members of $\mathcal{V}_{\textsf{curr}}$ sorted in a descending order based on $power_i, i=1,\dots ,k$

\For{i=1:min\{k,D\}}{

Set $P$ as parent of $c_i$ in the output tree

\If{$D>0$}{

\textbf{ExtendTree}($\mathcal{G},c_i,D-i$)
}
}
\end{algorithm}
In a nutshell, Alg.~\ref{InitialTree} aims to find an appropriate unique parent for each node (except the sink) in the graph. 
By doing so, a feasible aggregation tree is indeed constructed.
 In particular, $\mathcal{V}_{\textsf{done}}$ includes the nodes that their parents are chosen and initially defined as an empty set. Alg.~\ref{InitialTree} calls Alg. \ref{ExtendTree} on the sink node and receives a tree rooted at the sink. However, the returned tree by Alg. \ref{ExtendTree} may not be a spanning tree, i.e., some nodes may not be still in $\mathcal{V}_{\textsf{done}}$. To construct the spanning tree, Alg. \ref{InitialTree} goes through the nodes that are not in $\mathcal{V}_{\textsf{done}}$ and set them as children of one of their neighbors in $\mathcal{V}_{\textsf{done}}$ with the least number of children (in Line 4 of \emph{FastInitTree}). The intuition behind this is that a parent with less children is more likely to be able to participate its new child in the aggregation. Before proceeding to explain the details of Alg.~\ref{ExtendTree}, we give definition of ``\textit{well-structured}'' graph in the context of deadline-constraint aggregation tree construction, which is useful in the discussion of the algorithm. 
\begin{defi1}
	\label{def:1}
	Graph $\mathcal{G}=\{\mathcal{V}\cup \{S\},\mathcal{E}\}$ with $V=|\mathcal{V}|$ is well-structured under a specific sink deadline $D$ if the optimal tree in $\mathcal{G}$ is an optimal tree in a complete graph with $V$ nodes where $V\geq 2^D-1$.
\end{defi1}

\begin{figure*}[t!]
	\centering
	\vspace{-0.05em}
	\begin{subfigure}[b]{0.25\textwidth}
		\begin{center}
			\includegraphics[width=\textwidth]{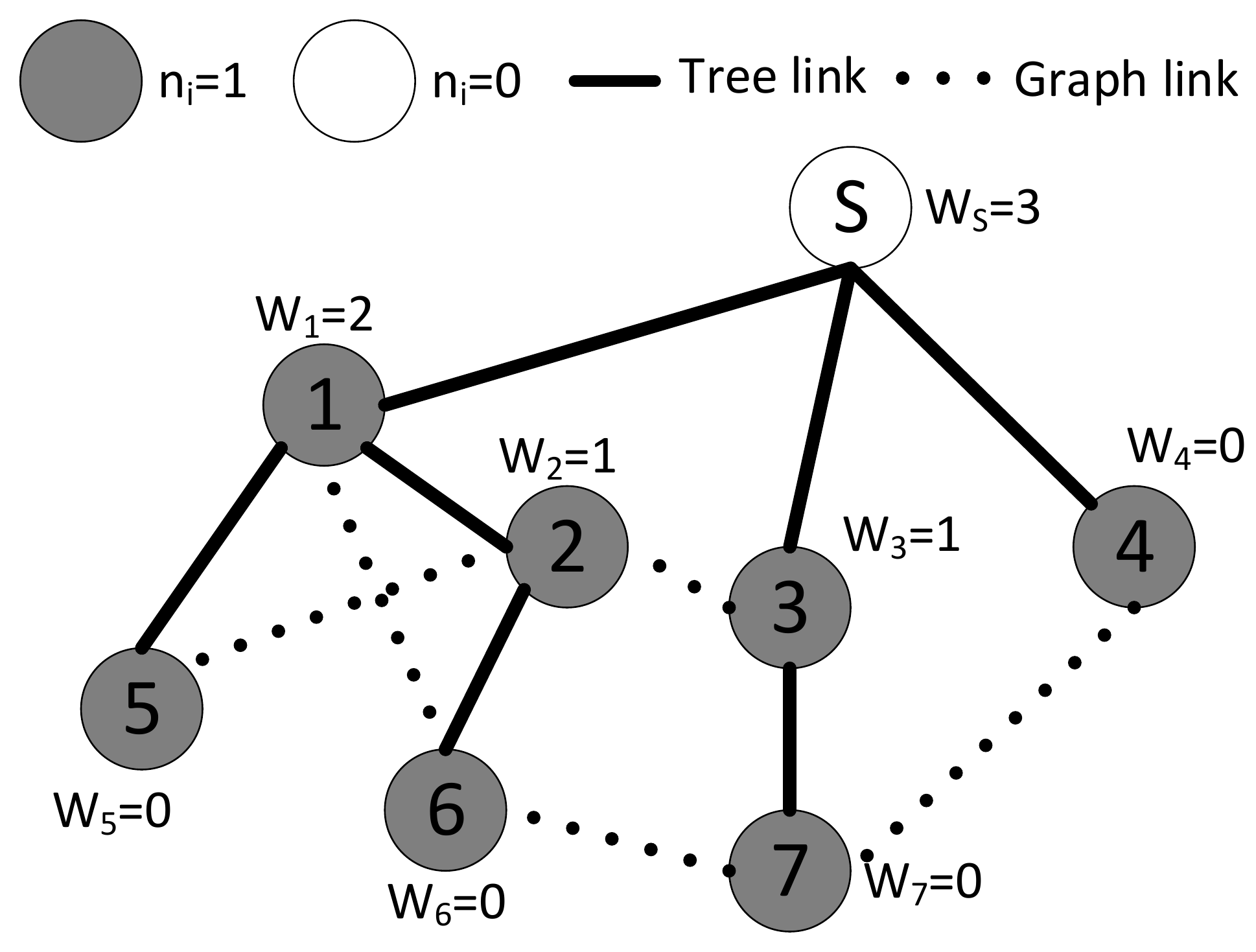}
			\caption{A \textit{well-structured} graph with $12$ links.}
			\label{fig:Ex3a}
		\end{center}
	\end{subfigure}%
	\hspace{20mm}
	~ 
	\begin{subfigure}[b]{0.25\textwidth}
		\begin{center}
			\includegraphics[width=\textwidth]{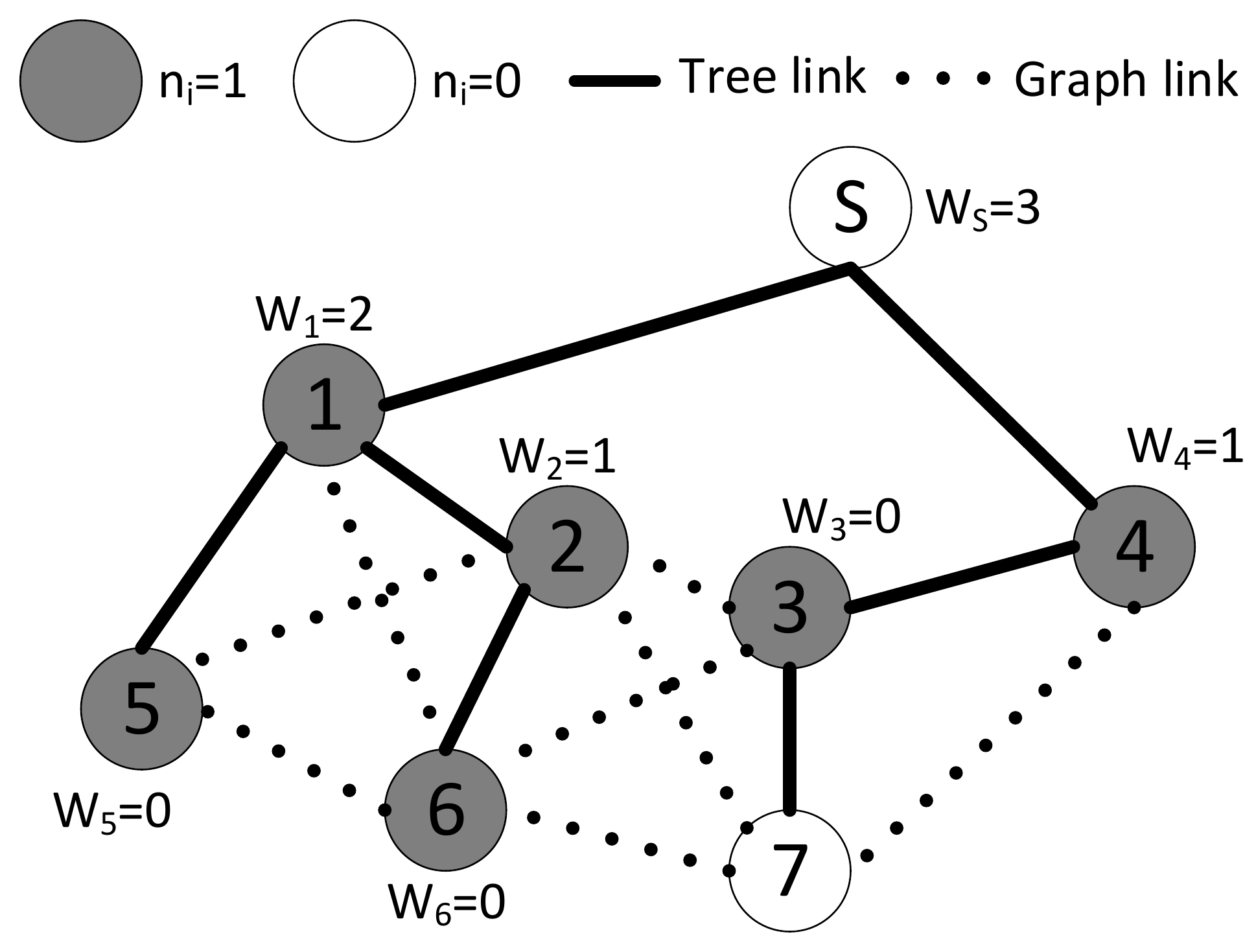}
			\caption{A graph that is not \textit{well-structured} with $15$ links. }
			\label{fig:Ex3b}
		\end{center}
	\end{subfigure}%
	\vspace{2mm}
	\caption{Illustration of \textit{well-structured} graph with sink deadline $D=3$. Although the graph in Fig.~\ref{fig:Ex3b} is more rich in terms of number of  links, the structure of graph does not allow to participate all nodes under any feasible tree.}
	\label{fig:Ex3}
\end{figure*}

Alg. \ref{ExtendTree} is the main part of the \emph{FastInitTree} algorithm. It works recursively on the input parent $P$ to develop a tree. Note that the maximum number of participant nodes with deadline $D$ is $2^D-1$ (see \cite{Alinia} for the proof) that is achievable if the graph is \emph{well-structured} for deadline $D$, thereby its structure allows to construct a tree with the maximum feasible $QoA$ of $2^D-1$. We emphasize that if a graph is well-structured, it does not imply that the number of communication links in the graph is equal (or even close) to the number of links in the corresponding complete graph. For further illustration, let's consider the well-structured graph in Fig.~\ref{fig:Ex3a} with $7$ (i.e., $2^3-1$) nodes and $12$ links. 
Despite more connectivity among the nodes in the graph of Fig.~\ref{fig:Ex3b}, it is not a well-structured graph and its optimal $QoA$ is less than the optimal $QoA$ in Fig.~\ref{fig:Ex3a}.
Finally, we call the optimal tree of a well-structured graph $\mathcal{G}=\{\mathcal{V}\cup \{S\},\mathcal{E}\}$ as \emph{ideal} tree for deadline $D$ while its maximum $QoA$ is $2^D-1$.
  
Now, we turn back to explain the main idea behind developing  Alg.~\ref{InitialTree}. In Alg.~\ref{InitialTree} we assume that the network graph is well-structured and try to build an aggregation tree such that its structure is as close as possible to the corresponding \textit{ideal} tree. It is not difficult to see that  in \textit{ideal} tree, the number of children of each node (including sink) is equal to its waiting time (as in Fig. \ref{fig:Ex3a}). Based on this fact, Alg.~\ref{InitialTree} starts from the sink node and by calling Alg.~\ref{ExtendTree} tries to find top $D$ most \emph{powerful} neighbors of sink, where the \textit{power} of a node is defined as the number of its neighbors (Line 5 in Alg. \ref{ExtendTree}). Indeed, the algorithm assumes that these $D$ nodes will have waiting times $\{D-1, \dots ,0\}$ according to their ability to communicate with the other nodes, i.e., their power. Then, the algorithm considers these nodes as the sink's children in the final tree. Alg.~\ref{ExtendTree} is called recursively on sink's children to build the rest of the tree. The wisdom of the algorithm in selecting children of each node makes it as a promising method.

\subsection{Discussions on the Optimality and Complexity of Alg.~\ref{InitialTree}}
Theorem \ref{ExtentTreeSpanning} proves that Alg.~\ref{InitialTree} outputs the optimal tree, given that the underlying graph is complete.

\begin{thm}
\label{ExtentTreeSpanning}
Alg.~\ref{InitialTree} generates an optimal aggregation tree given that the underlying graph is a complete graph $\mathcal{G}_k$ with $k\geq 2^D-1$.
\end{thm}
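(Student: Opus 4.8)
The plan is to reduce the claim to a structural statement about Alg.~\ref{ExtendTree}. Since it is proved in \cite{Alinia} (and recalled in Theorem~\ref{lemma_tree_importance}) that $\Phi_\psi^D\le 2^D-1$ for \emph{every} aggregation tree $\psi$, it suffices to show that when the underlying graph is the complete graph on $k\ge 2^D-1$ source sensor nodes (plus the sink), Alg.~\ref{InitialTree} returns a spanning tree $\psi$ with $\Phi_\psi^D = 2^D-1$; any such tree is then optimal. I would prove this in two parts: (i) the core call \textbf{ExtendTree}$(\mathcal{G},S,D)$ already grows, as a subtree of the returned spanning tree, a copy of the \emph{ideal} tree of Definition~\ref{def:1} on $2^D-1$ nodes; and (ii) the $k-(2^D-1)$ leftover nodes, attached as leaves in Line~4 of Alg.~\ref{InitialTree}, do not decrease the maximum achievable $QoA$.

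For part (i), I would first set up the bookkeeping quantity $N(d)$, the number of fresh nodes a call \textbf{ExtendTree}$(\cdot,d)$ adds to $\mathcal{V}_{\textsf{done}}$ (equivalently, the size of the subtree it grows, excluding its root). Reading off Alg.~\ref{ExtendTree} gives the recursion $N(d)=d+\sum_{i=1}^{d}N(d-i)$ with $N(0)=0$, whence $N(d)=2^{d}-1$; in particular $N(D)=2^{D}-1$. The crux is then an induction on recursion depth (equivalently, on decreasing $d$) establishing that \emph{every} invocation \textbf{ExtendTree}$(P,d)$ occurring during the run sees $|\mathcal{V}_{\textsf{curr}}|\ge N(d)\ge d$. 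Because $\mathcal{G}$ is complete, $\mathcal{V}_{\textsf{curr}}$ is precisely the set of currently unassigned nodes, so this inequality means the caps $\min\{k,D\}$ at the top and $\min\{|\mathcal{V}_{\textsf{curr}}|,d\}$ in the loop never bite: $P$ receives exactly $d$ children $c_1,\dots,c_d$, and the recursive calls \textbf{ExtendTree}$(\mathcal{G},c_i,d-i)$ reproduce, by the induction hypothesis, the ideal subtree of budget $d$ (a node of budget $d$ gets $d$ children of budgets $d-1,\dots,0$). The inequality itself follows from the observation that, at the moment \textbf{ExtendTree}$(P,d)$ is entered, $|\mathcal{V}_{\textsf{done}}|\le (2^{D}-1)-N(d)$: the whole recursion assigns $N(D)=2^{D}-1$ nodes in total, and the $N(d)$ nodes destined for $P$'s subtree are disjoint from those already in $\mathcal{V}_{\textsf{done}}$; hence $|\mathcal{V}_{\textsf{curr}}|=k-|\mathcal{V}_{\textsf{done}}|\ge k-(2^{D}-1)+N(d)\ge N(d)$, where the last step uses $k\ge 2^{D}-1$.

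For part (ii) and the conclusion, the subtree rooted at $S$ produced by the core call is isomorphic to the ideal tree; assigning every node a waiting time equal to its number of children (so $W_S=D$, and the children of a node with waiting time $w$ receive $w-1,\dots,0$) satisfies constraint~\eqref{eq:constraint_DI} with equality and makes every source node a participant, so this subtree alone realizes $QoA=2^{D}-1$. The remaining $k-(2^{D}-1)$ nodes are hung as leaves; declaring $n=0$ (and, e.g., $W=0$) for each of them keeps the above schedule feasible on the full spanning tree $\psi$, so $\Phi_\psi^D\ge 2^{D}-1$. Combined with the universal bound $\Phi_\psi^D\le 2^{D}-1$, we get $\Phi_\psi^D=2^{D}-1=\max_{\psi'\in\mathcal{T}(\mathcal{G})}\Phi_{\psi'}^D$, i.e., $\psi$ is an optimal aggregation tree.

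I expect the main obstacle to be the induction in part (i): certifying that neither $\min$ in Alg.~\ref{ExtendTree} ever truncates the recursion, so that the ideal structure is realized in full and the leftover set is exactly the $k-(2^{D}-1)$ nodes handled in Line~4. This is exactly where the hypothesis $k\ge 2^{D}-1$ is consumed — via the closed form $N(d)=2^{d}-1$ and the disjointness of the node sets of sibling subtrees, which together yield $|\mathcal{V}_{\textsf{done}}|\le (2^{D}-1)-N(d)$ upon entering each call. The remaining pieces — the recursion for $N(d)$, the feasibility of the waiting-time assignment on the ideal tree, and the harmlessness of extra leaves — are routine.
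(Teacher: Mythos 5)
Your proof is correct and follows essentially the same route as the paper's: show that the recursive calls of Alg.~\ref{ExtendTree} reproduce the \emph{ideal} tree (each node entered with budget $d$ receiving exactly $d$ children of budgets $d-1,\dots,0$), and then conclude optimality from the universal bound $\Phi_\psi^D\le 2^D-1$. Your write-up is in fact more explicit than the paper's, since the closed form $N(d)=2^d-1$ and the bound $|\mathcal{V}_{\textsf{curr}}|\ge N(d)$ pinpoint exactly where the hypothesis $k\ge 2^D-1$ is consumed, a point the paper's proof leaves implicit.
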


\begin{proof}
Note that in an \emph{ideal} tree for deadline $D$, we have this key property that after running optimal scheduling algorithm \cite{Shroff} on the tree, each node having waiting time $w, 0\leq w\leq D$, has exactly $w$ children with waiting times $\{w-1,\dots ,0\}$. By following the steps of Alg.~\ref{InitialTree}, it can be seen that the algorithm preserves the key property of the ideal tree while constructing the aggregation tree. The recursive tree construction process starts from the sink node. Since we have a complete graph, Alg.~\ref{ExtendTree} is able to choose exactly $D$ out of $N$ nodes with highest \emph{power} as children of the sink. In the next step, Alg.~\ref{ExtendTree} is called on these $D$ children of the sink namely $c_1,c_2,\dots ,c_D$ with $power_i\geq power_{i+1}, i=1,\dots ,D-1$ and chooses $D-i$ children for $c_i, i=\{1,\dots ,D\}$ to follow the property of the ideal tree for deadline $D$. This process, while keeping the key property of the corresponding ideal tree, continues recursively in the same manner on the remaining nodes until all nodes assigned to a parent. Thus, it ensures that the final output is an ideal tree.
\end{proof}

\begin{thm}
	\label{heuristic_complexity}
	The time complexity of Alg.~\ref{InitialTree} is $O(Nv\log v)$ where $v$ is the maximum node degree in the network graph. 
\end{thm}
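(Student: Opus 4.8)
The plan is to bound the running time of Alg.~\ref{InitialTree} (FastInitTree) by accounting separately for the two phases: the recursive call to Alg.~\ref{ExtendTree} (Extend-Tree) in Line~3, and the leftover-assignment sweep in Line~4. First I would observe that in Extend-Tree, each vertex $P$ of the graph is handled as the ``parent'' argument at most once: once a node enters $\mathcal{V}_{\textsf{done}}$ it is never passed into Extend-Tree again, and the recursion descends only onto newly-selected children. So the total work is the sum, over the $N$ sets (equivalently, over the nodes that ever become a parent — bounded by $N$), of the per-invocation cost.

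Next I would tally the per-invocation cost of Extend-Tree for a node $P$. Collecting the neighbors not yet in $\mathcal{V}_{\textsf{done}}$ (Line~3) and computing $power_i$ for each candidate child (Line~5) touches at most $v$ neighbors, and computing each $power_i$ costs $O(v)$ since it counts neighbors of a node; a naive bound is therefore $O(v^2)$ per invocation, but if the degrees/adjacency are maintained incrementally (decrementing $power$ counters as nodes move into $\mathcal{V}_{\textsf{done}}$) this drops to $O(v)$ amortized plus the sort. The dominant term is sorting the at most $v$ candidates in Line~6, which costs $O(v\log v)$. The \textbf{for} loop in Lines~7--11 itself does only $O(\min\{k,D\})\le O(v)$ bookkeeping beyond the recursive descents, which are charged to the child invocations. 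Hence each of the $\le N$ invocations costs $O(v\log v)$, giving $O(Nv\log v)$ for Line~3. For Line~4, each node not in $\mathcal{V}_{\textsf{done}}$ inspects its $\le v$ neighbors to find the one in $\mathcal{V}_{\textsf{done}}$ with fewest children, at cost $O(v)$ per node and $O(Nv)$ overall, which is dominated by $O(Nv\log v)$. Summing the two phases yields the claimed $O(Nv\log v)$.

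The step I expect to need the most care is the claim that the sort in Line~6 is the \emph{dominant} per-invocation cost — i.e.\ justifying that the $power_i$ computations and the $\mathcal{V}_{\textsf{curr}}$ construction can be done within $O(v\log v)$ (or absorbed by amortization) rather than $O(v^2)$, since a too-naive reading of Line~5 recomputes neighbor counts from scratch. I would address this by noting that $power_i$ need only be maintained as a counter per node, updated by a constant amount each time a neighbor is absorbed into $\mathcal{V}_{\textsf{done}}$, so the total maintenance cost across the whole algorithm is $O(\sum_{i} v) = O(Nv)$. A secondary point worth a sentence is confirming that the recursion depth and branching never cause a node to be re-sorted: since children are removed from future candidate pools the moment they are assigned, the $N$ bound on the number of Extend-Tree invocations is tight, and no geometric blow-up occurs.
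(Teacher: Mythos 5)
Your proof follows essentially the same route as the paper's: charge each of the at most $N$ invocations of Extend-Tree with its $O(v\log v)$ sorting cost and note that the leftover-assignment sweep in Line~4 is dominated by this term. Your additional care about maintaining the $power_i$ values as incrementally updated counters (rather than recomputing them per invocation, which would give $O(Nv^2)$) addresses a detail the paper's one-line proof glosses over, and it is precisely what is needed for the stated $O(Nv\log v)$ bound to hold under a literal reading of Line~5.
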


\begin{proof}
	The cost of Alg.~\ref{ExtendTree} is determined by total sorting cost of children for $O(N)$ nodes which is bounded by $O(Nv\log v)$. Alg.~\ref{InitialTree} goes over nodes who are not in $\mathcal{V}_{\textsf{done}}$ to assign them to a parent which costs $O(N)$. Therefore, the total cost of the Alg. \ref{InitialTree} is $O(Nv\log v)$.
\end{proof}
\subsection{Remarks on the Time-varying Deadline  in the Sink}
Data aggregation is a periodic action in WSNs and in each period, the sink may change aggregation parameters such as deadline. When deadline changes, the previously constructed aggregation tree in the last period may not produce the same level of $QoA$. In this situation, a new tree structure is needed to maximize $QoA$ under the new deadline. A na\"{i}ve approach in this situation is running the tree construction algorithm with the new deadline. However, the following theorem shows that the optimal tree does not need to be changed for the case that the deadline is decreased as compared to its previous value.

\begin{thm}
\label{deadline_reducing}
If all nodes are source and $\psi^{\star}$ is an optimal tree under the sink deadline $D$, then $\psi^{\star}$ is optimal for deadline $D^\prime, D^\prime=1,2,\dots ,D-1$. In addition, the optimal scheduling can be reconstructed by reducing the previous waiting times by $D-D^\prime$.
\end{thm}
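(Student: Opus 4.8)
The plan is to reduce the claim to the rigidity of the bound $2^{D}-1$ from Theorem~\ref{lemma_tree_importance} together with the self-similarity of the optimal (``ideal'') tree. \emph{Step~1: the shape of an optimal tree for $D$.} Run the optimal scheduling on $\psi^{\star}$. By constraint~\eqref{eq:constraint_DI}, a participant node with assigned waiting time $w$ has at most $w$ participant children and their waiting times are distinct members of $\{0,\dots,w-1\}$; so, writing $f(w)$ for the largest possible number of participants in a subtree rooted at a node of waiting time $w$, induction gives $f(w)\le 1+\sum_{k=0}^{w-1}f(k)=2^{w}$, hence the total $QoA$ on $\psi^{\star}$ is at most $2^{D}-1$. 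Since $\psi^{\star}$ is optimal, Theorem~\ref{lemma_tree_importance} forces equality, so every inequality above is tight: in the participant subtree of $\psi^{\star}$ each node of waiting time $w$ has exactly $w$ children, carrying waiting times $w-1,w-2,\dots,0$; in particular waiting times strictly decrease along every root-to-leaf path. Call this the \emph{ideal shape}.

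\emph{Step~2: the shifted schedule is feasible for $D'$ and attains $2^{D'}-1$.} Fix $D'\in\{1,\dots,D-1\}$, put $W_i'=W_i-(D-D')$, and declare $i$ a participant under $D'$ precisely when $W_i\ge D-D'$ (so $W_S'=D'$). A uniform shift leaves $W_i-M_i^{\psi}$ unchanged, and by the ideal shape the surviving children of a surviving node of new waiting time $w'$ are exactly those with old waiting times $D-D',\dots,w-1$ where $w=w'+(D-D')$, i.e.\ with new waiting times $\{0,\dots,w'-1\}$; hence $\vec W'$ satisfies~\eqref{eq:constrsint_W}--\eqref{eq:constraint_DI} for deadline $D'$. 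Moreover, since waiting times strictly decrease from root to leaves, the set $\{i:W_i\ge D-D'\}$ is closed under taking predecessors, so it is a legitimate participant set in the sense of~\eqref{eq:QoA}. Applying the ideal shape recursively, the surviving subtree below a node of new waiting time $w'$ is again of the ideal shape rooted at a node of waiting time $w'$, hence contains $f(w')=2^{w'}$ participant sources; summing over the $D'$ surviving children of the sink, $QoA_{\psi^{\star}}(\vec W')=\sum_{j=0}^{D'-1}2^{j}=2^{D'}-1$.

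\emph{Step~3: conclusion.} By Theorem~\ref{lemma_tree_importance} and the tree-independent upper bound it invokes, $2^{D'}-1$ is the largest $QoA$ attainable by any tree under deadline $D'$; since $\psi^{\star}$ with $\vec W'$ attains it, $\psi^{\star}$ is optimal for $D'$ and $\vec W'$ (the waiting times reduced by $D-D'$) is an optimal scheduling, which is exactly the assertion, and this holds for every $D'=1,\dots,D-1$. The main obstacle is Step~1: one must show that meeting the bound $2^{D}-1$ leaves no slack anywhere in the tree, which is what pins $\psi^{\star}$ to the ideal shape and makes the prune-and-shift count out to exactly $2^{D'}-1$. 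This step also implicitly assumes the ambient graph is rich enough for the bound of Theorem~\ref{lemma_tree_importance} to be attained (``well-structured'' in the sense of Definition~\ref{def:1}); for a sparser graph one would additionally need a monotonicity/DP argument to rule out a tree tailored specifically to the smaller deadline $D'$.
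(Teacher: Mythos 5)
Your argument is essentially the paper's own proof: keep $\psi^{\star}$, shift every waiting time down by $D-D'$, discard the nodes whose shifted waiting time would become negative, and count the survivors as $\sum_{j=0}^{D'-1}2^{j}=2^{D'}-1$, which matches the tree-independent upper bound for deadline $D'$ (the paper does the same count via the level recursion $f(i)=f(i+1)+\dots+f(D)$). Your Step 1, deriving the ideal shape from tightness of the $2^{D}-1$ bound, is a detail the paper simply assumes, and your closing caveat about graphs that are not well-structured corresponds exactly to the case the paper waves away with ``for the case that $\psi^{\star}$ is not ideal, the proof is similar,'' so the two proofs have the same substance and the same coverage.
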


\begin{proof}
We prove the theorem when $\psi^{\star}$ is an ideal tree. For the case that $\psi^{\star}$ is not ideal, the proof is similar. With the ideal tree $\psi^{\star}$, there are $2^D-1$ participant nodes in the tree. Each node (including sink) with waiting time $w, w\in \{0,\dots ,D\}$, has exactly $w$ children with assigned waiting time $\{w-1, \dots ,0\}$. We claim that if we keep the same tree for new sink deadline $D^\prime$ with $D^\prime<D$ and reduce the previously assigned waiting times by $D-D^\prime$ then, the new scheduling is feasible and the number of participant nodes is $2^{D^\prime}-1$, i.e., the optimal $QoA$, which in turn proves the optimality of the tree for deadline $D^\prime$. First, the scheduling is feasible since all waiting times are reduced by a constant and so the transmissions occur in the same order as in the previous feasible scheduling.

Second, the number of participant nodes in the new scheduling, namely $X$, can be calculated by subtracting total number of nodes with waiting time less than $D'$ from $2^D-1$ since with the new scheduling, these nodes' waiting times will be negative which has no meaning and makes them non-participant nodes. Therefore, we can sum up all nodes in the previous scheduling having waiting time greater than or equal to $D'$ to find $X$. Formally, we have
$X=f(D-D^\prime)+f(D-D^\prime+1)+\dots +f(D)\nonumber$ where, $f(i)$ denotes the number of nodes in the previous scheduling with assigned waiting time $i$. Note that we have $f(i)=f(i+1)+f(i+2)+\dots +f(D)$ and $f(D)=f(D-1)=1$. Then, we can calculate $\sum_{i=D-D^\prime}^{D} f(i)$ as follow:
\begin{align}
&\overbrace{f(D-D^\prime)}^\text{A}+\overbrace{f(D-D^\prime+1)}^\text{B}+\dots +\overbrace{f(D)}^\text{C}=\nonumber \\
&\overbrace{f(D-D^\prime+1)+f(D-D^\prime+2)+\dots +f(D)}^\text{A}+\overbrace{f(D-D^\prime+2)+\dots +f(D)}^\text{B}+...+\overbrace{f(D)}^\text{C}\nonumber
\end{align}
By solving the above equation we have $X=\sum_{i=D-D^\prime}^{D} f(i)=\sum_{i=0}^{D^\prime -1} 2^i=2^{D^\prime}-1$. 
\end{proof}

Theorem \ref{deadline_reducing} implies that if we construct a near optimal tree for a specific deadline, then the same tree can be used for all shorter deadlines. Most importantly, the new scheduling is straightforward and needs no cost. This can help to avoid the overhead of running the tree construction and scheduling algorithm when deadline changes. 

\section{Simulation Results}

\label{sec:sim}
In this section, we evaluate our proposed algorithms through extensive simulations. Unless otherwise specified, the settings are as follows: 100 sensor nodes uniformly dispersed in a square field with side length of 300m. Sink node is located at the center of the top side of the square field i.e., its position is (150,300). Communication range of nodes is 75m, i.e., two nodes are connected in the network if their distance is ``$\leq 75\text{m}$''. After deployment, sensor nodes construct an initial data aggregation tree. Except for the experiments that use initial tree built by \emph{FastInitTree} algorithm, the tree is constructed based on Greedy Incremental Tree (GIT) algorithm \cite{Krishnamachari2}. 
We let $\alpha =0.2$ and $\beta =2$, and choose 80\% of nodes randomly as the sources. Each data point of the figures belongs to the average value of 50 runs with the 95\% confidence interval where each run is a different random topology. Moreover, for each topology, sink imposes a deadline in terms of time slots uniformly and randomly selected from interval [10,20]. 
We report the results of approximation algorithms after 50 iterations where an iteration is defined as a timer expiration of a sensor node. 
\begin{table}
	\centering
	\caption{Acronyms for the algorithms }
	\label{tb:comparisons}
    \begin{tabular}{ | l| p{10cm} |}
    \hline
    \textbf{Notation} & \textbf{Description} \\ \hline\hline
    $Z$-Optimal & The optimal solution of problem $Z$ implemented using exhaustive search\\ \hline
    Approx-1 & Approximation algorithm that estimates the current $QoA$ using Equation (\ref{eq:QoAestimate1}) in each node (has some overheads)\\ \hline
    Approx-2 & Approximation algorithm that estimates the current $QoA$ using Equation 								(\ref{eq:QoAestimate3}) and in each node (has no overheads) \\ \hline
    Algorithm of \cite{Shroff} & Optimal algorithm presented in \cite{Shroff} to maximize $QoA$
    in a given tree\\ \hline
    Approx-1H & Approx-1 starting from initial tree constructed by FastInitTree\\ \hline 
    Approx-2H & Approx-2 starting from initial tree constructed by FastInitTree\\ \hline
    FastInitTree & Heuristic algorithm\\
    \hline
    \end{tabular}
\end{table}

\begin{figure*}
	\center
		\begin{minipage}[b]{.308\textwidth}%
		\includegraphics[width=\textwidth]{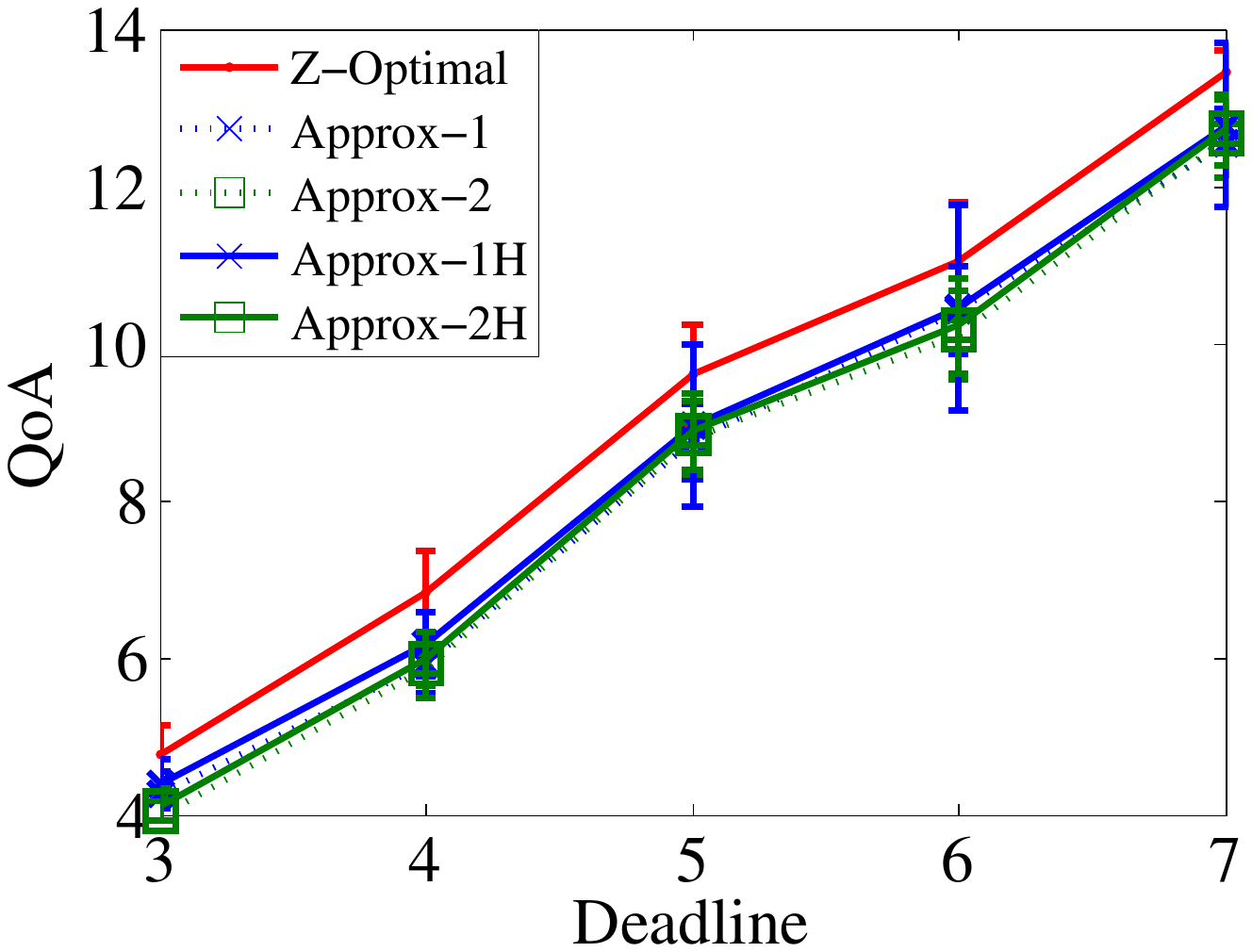}
		\caption{Quality of aggregation vs. deadline ($V=15$).}%
		\label{fig:opt}%
	\end{minipage}
	\hspace{1mm}
	\begin{minipage}[b]{.308\textwidth}%
		\includegraphics[width=\textwidth]{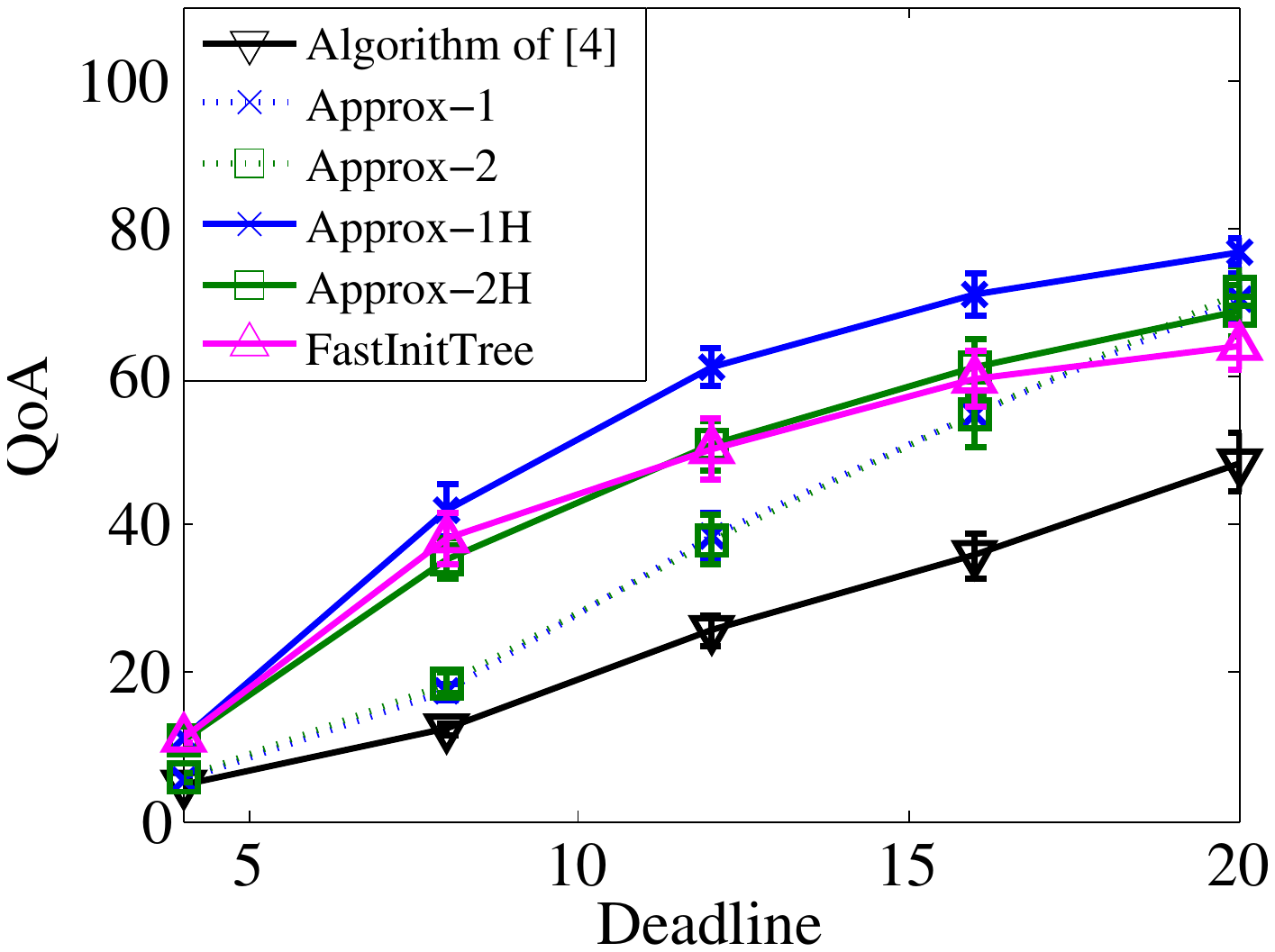}
		\caption{Quality of aggregation vs. deadline ($V=100$).}%
		\label{fig:deadline}
	\end{minipage}
	\hspace{1mm}
	\begin{minipage}[b]{.305\textwidth}%
		\includegraphics[width=\textwidth]{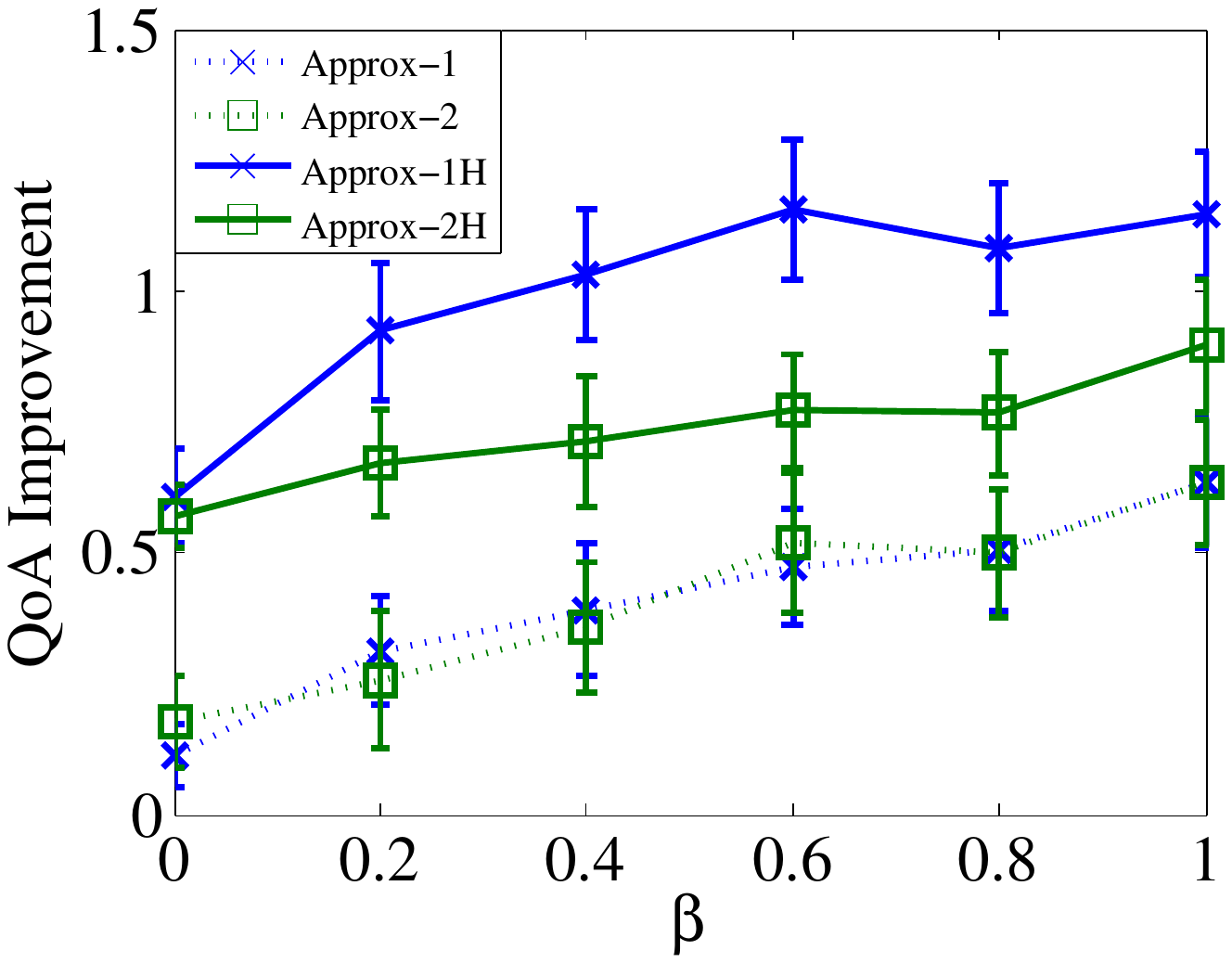}
		\caption{Improvement in quality of aggregation vs. $\beta$.}%
		\label{fig:beta}%
	\end{minipage}%
\end{figure*}

\subsection{Performance Comparison with the Optimal Solution}
In this section, we compare the performance of the proposed methods to the optimal 
solution. Since calculating optimal solution is computationally infeasible in large scale networks, we set up a small comparison experiment where 15 sensor nodes with communication range of 10m dispersed in a field with side length of 40m and sink coordinate is (20,40). Moreover, due to small network size, we consider all sensors as the source nodes.

Fig. \ref{fig:opt} portrays $QoA$ of markov based algorithms against sink deadline. The main purpose is to compare our schemes with the optimal. The result for the algorithms ``Approx-1'', ``Approx-2'', ``Approx-1H'' and ``Approx-2H'' are very close to each other. ``Approx-1H'' is 93\% close to optimal in this case which is slightly better than the other algorithms. We believe that in real-world scenarios with the higher number of sensor nodes, the performance difference between the markov based algorithms is more visible than that of the small scenario. To scrutinize this claim in more detail, we set up another set of experiments to investigate the improvements against various deadlines in the next subsection. 


\begin{figure*}[t!]
	\centering
	\vspace{-0.05em}
	\begin{subfigure}[b]{0.3\textwidth}
		\begin{center}
			\includegraphics[width=\textwidth]{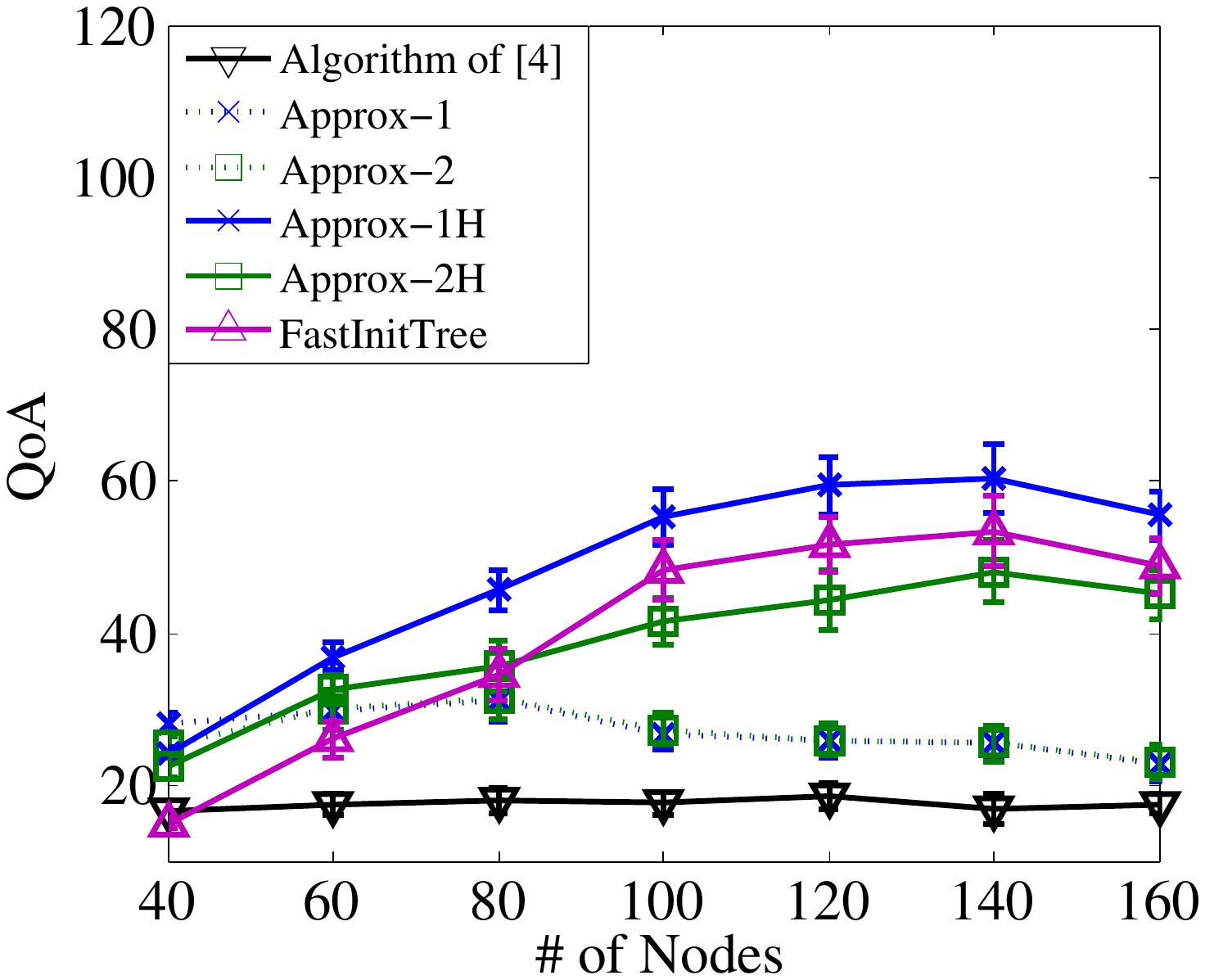}
			\caption{$D=10$}
			\label{fig:QoA-N-D10}
		\end{center}
	\end{subfigure}
	\begin{subfigure}[b]{0.31\textwidth}
		\begin{center}
			\includegraphics[width=\textwidth]{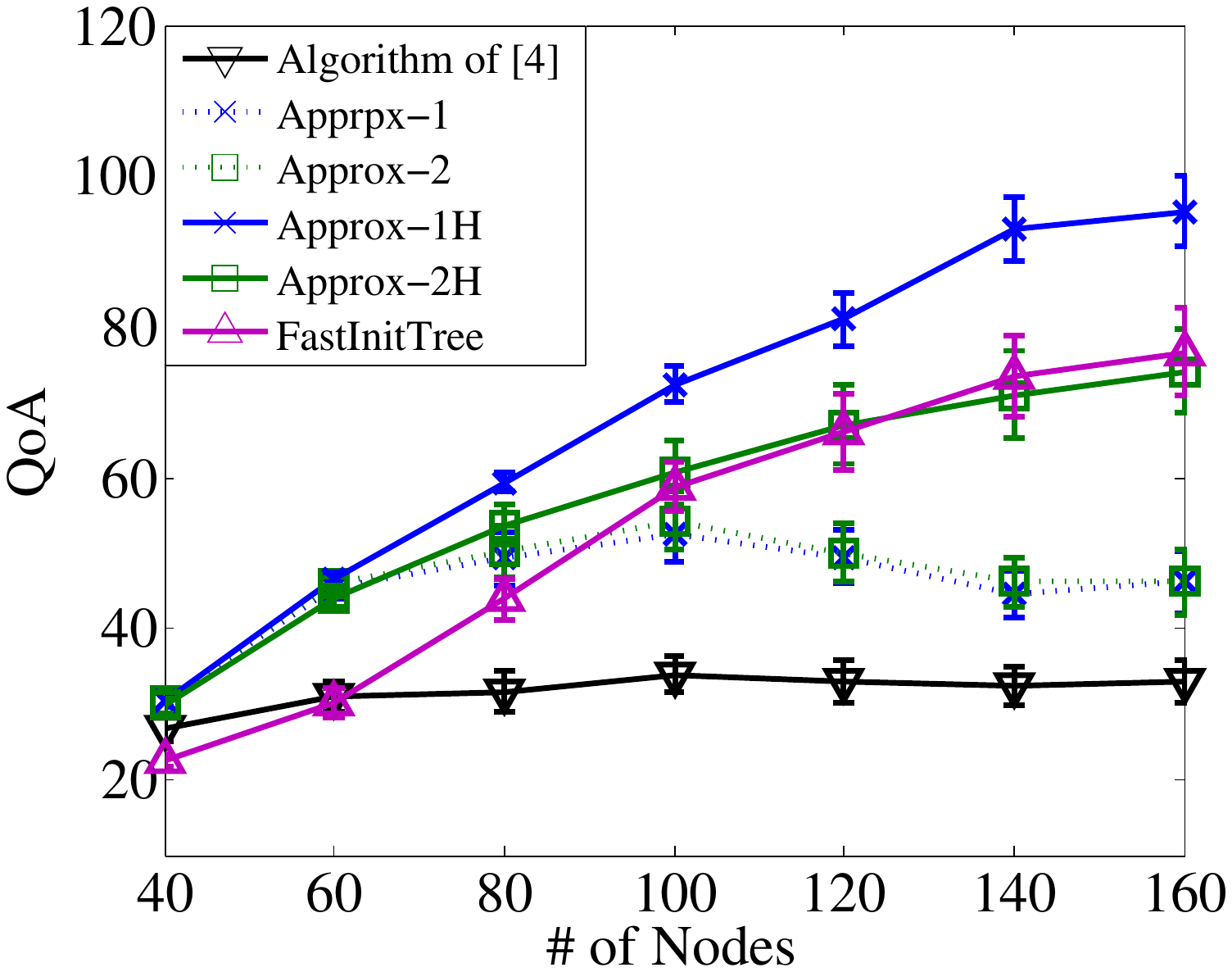}
			\caption{$D=15$}
			\label{fig:QoA-N-D15}
		\end{center}
	\end{subfigure}%
	\begin{subfigure}[b]{0.32\textwidth}
		\begin{center}
			\includegraphics[width=\textwidth]{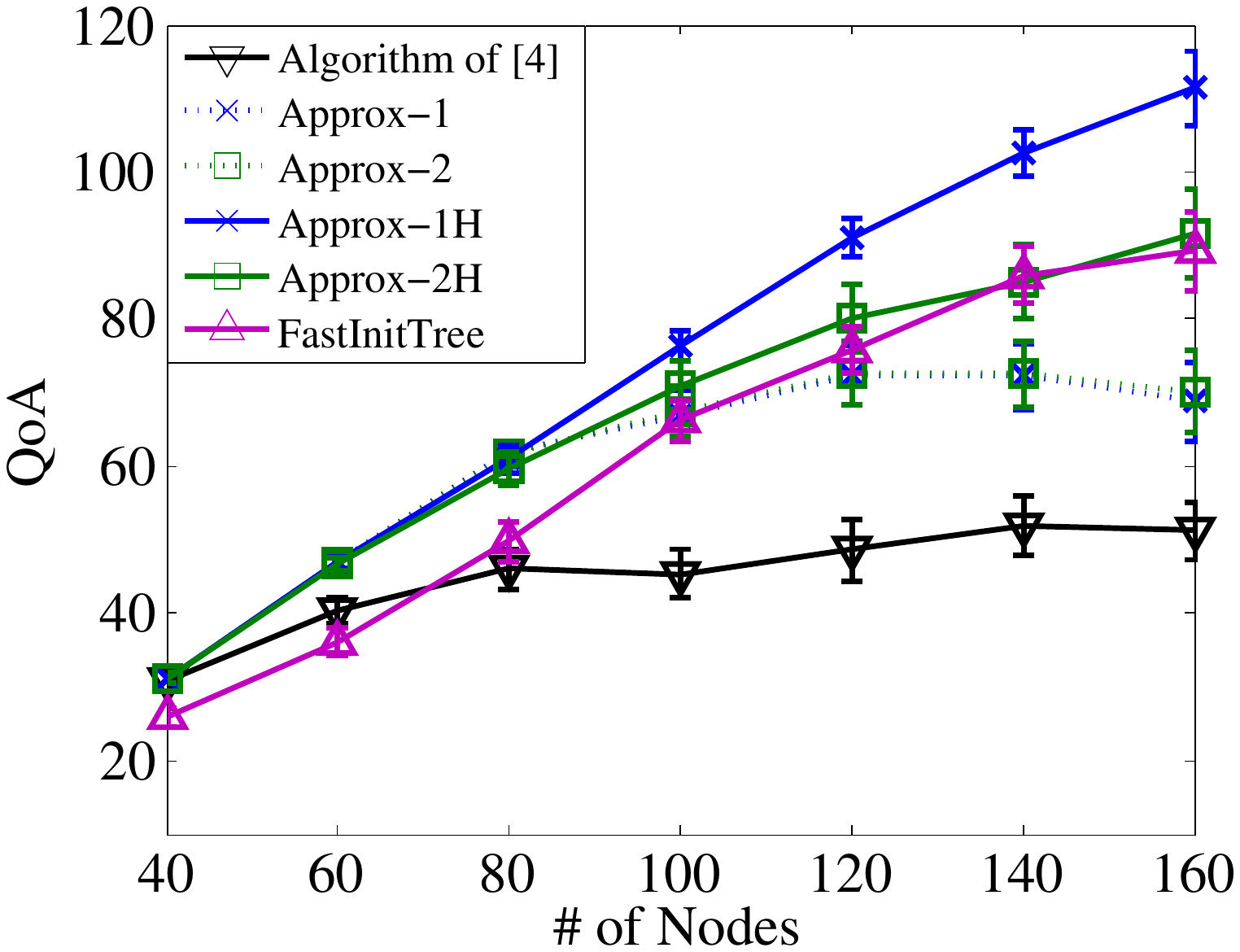}
			\caption{$D=20$}
			\label{fig:QoA-N-D20}
		\end{center}
	\end{subfigure}%
	\vspace{2mm}
	\caption{Quality of aggregation vs. network size.} 
	\label{fig:QoA-N-D}
\end{figure*}

\subsection{The Effect of the Deadline}
We now study the effect of sink deadline on $QoA$.
Based on Fig. \ref{fig:deadline}, the trend is that $QoA$ improves as deadline increases. This is in line with the fact that by increasing the deadline, more sensor nodes have the opportunity to participate in data aggregation. 

A notable observation is that \emph{FastInitTree} shows a better performance compared to both ``Approx-1'' and ``Approx-2'' when the deadline is less than 17. Its result is also 96\% of the ``Approx-1H''. There is also a small difference between ``Approx-1'' and ``Approx-2''. On average, ``Approx-2'' is 98\% close to ``Approx-1''. Based on these observations, \emph{FastInitTree} seems a proper choice with respect to its low overhead and low cost. ``Approx-1H'' has the best performance among all algorithms and improves ``Algorithm of \cite{Shroff}'' by 106\%, on average. The poor quality of ``Algorithm of \cite{Shroff}'' is a result of ignoring the impact of data aggregation tree structure. 

\subsection{The Effect of Parameter $\beta$}
As it is stated in Section IV.A, the approximation gap theoretically decreases while $\beta$ increases. This parameter is an input for the proposed Markov based approximation algorithms and has a big impact on convergence rate of the algorithms. We depict the effect of $\beta$ by simulation in Fig. \ref{fig:beta}. Since ``Algorithm of \cite{Shroff}'' and \emph{FastInitTree} are independent of the value of $\beta$, Fig. \ref{fig:beta} only portrays the amount of improvements by Markov based algorithms to ``Algorithm of \cite{Shroff}''.
By increasing $\beta$, in addition to achieving higher improvements, we observe that the improvement momentum of Markov based schemes to ``Algorithm of \cite{Shroff}'' degrades while $\beta$ grows. This is a consequence of fast convergence of approximation schemes to the optimal where in the proximity of optimal solution improvements are smaller. The experimental results of Fig. \ref{fig:beta} confirm the theory. 

\subsection{The Effect of the Network Size}
Fig. \ref{fig:QoA-N-D}a-\ref{fig:QoA-N-D}c depict obtained $QoA$ values for network sizes of 40 to 160 with step 20 for deadline values of 10, 15 and 20. An interesting observation is that the performance of \emph{FastInitTree} significantly increases as network size grows. As a result, ``Approx-1H'' and ``Approx-2H'' where use \emph{FastInitTree} as initial point, show the same trend. The reason behind depicting the results with different deadlines is to fairly compare \emph{FastInitTree} to ``Approx-1'' and ``Approx-2''. It can be observed from Fig. \ref{fig:QoA-N-D}a-\ref{fig:QoA-N-D}c that \emph{FastInitTree} achieves better $QoA$ when the ratio $N/D$ is greater than a specific value (here for $N/D>6$, approximately). Therefore, \emph{FastInitTree} works better with high values of deadline, in general.
 
The improvement by Markov based approximation algorithms to ``Algorithm of \cite{Shroff}'' are at least 39\%, 37\%, 32\% and 27\% in all network sizes for ``Approx-1'', ``Approx-2'', ``Approx-1H'' and ``Approx-2H'', respectively. Moreover, the average improvements are 52\%, 49\%, 68\% and 53\%, respectively. ``Approx-1H'' and ``Approx-2H'' show better performance because they start from initial tree constructed by \emph{FastInitTree} algorithm. The result for \emph{FastInitTree} algorithm is promising since it improves ``Algorithm of \cite{Shroff}'' by 29\%, on average. ``Algorithm of \cite{Shroff}'' shows little or no variation in obtained $QoA$ value while network size grows.

\subsection{The Effect of FastInitTree on the Convergence Rate}
Since the transition rates are set wisely to improve the maximum $QoA$ of the aggregation tree, we expect to obtain a better $QoA$ as the number of transitions increases. Each transition can only occur after a node's timer expiration. However, all timer expiration does not lead to a transition. A key point here is that a desired level of $QoA$ can be achieved with a fewer number of transitions if the initial tree provided by \emph{FastInitTree} algorithm is chosen wisely. Fig. \ref{fig:iteration} demonstrates how the four Markov-based approximation algorithms improve as the number of iterations increases. Note that the only difference between ``Approx-1'' and ``Approx-1H'' is in their initial state. The difference between ``Approx-2'' and ``Approx-2H'' is also similar. Then, a key point here is the effect of \emph{FastInitTree} algorithm on the convergence rate of Markov approximation. ``Approx-2H'' with 10 iterations reaches the same level of $QoA$ that ``Approx-2'' obtains with 40 iterations. In a similar case, ``Approx-1H'' with only 10 iterations works better than ``Approx-1'' after 40 iterations.

Finally, in a \textit{microscopic} view in Fig.~\ref{fig:transition}, we demonstrate the evolution of the maximum achieved $QoA$ after each transition, i.e., migrating to a new aggregation tree, for a randomly selected sample topology. 
%
%
\begin{figure}
\center
\begin{minipage}[b]{.32\textwidth}
\includegraphics[width=\textwidth]{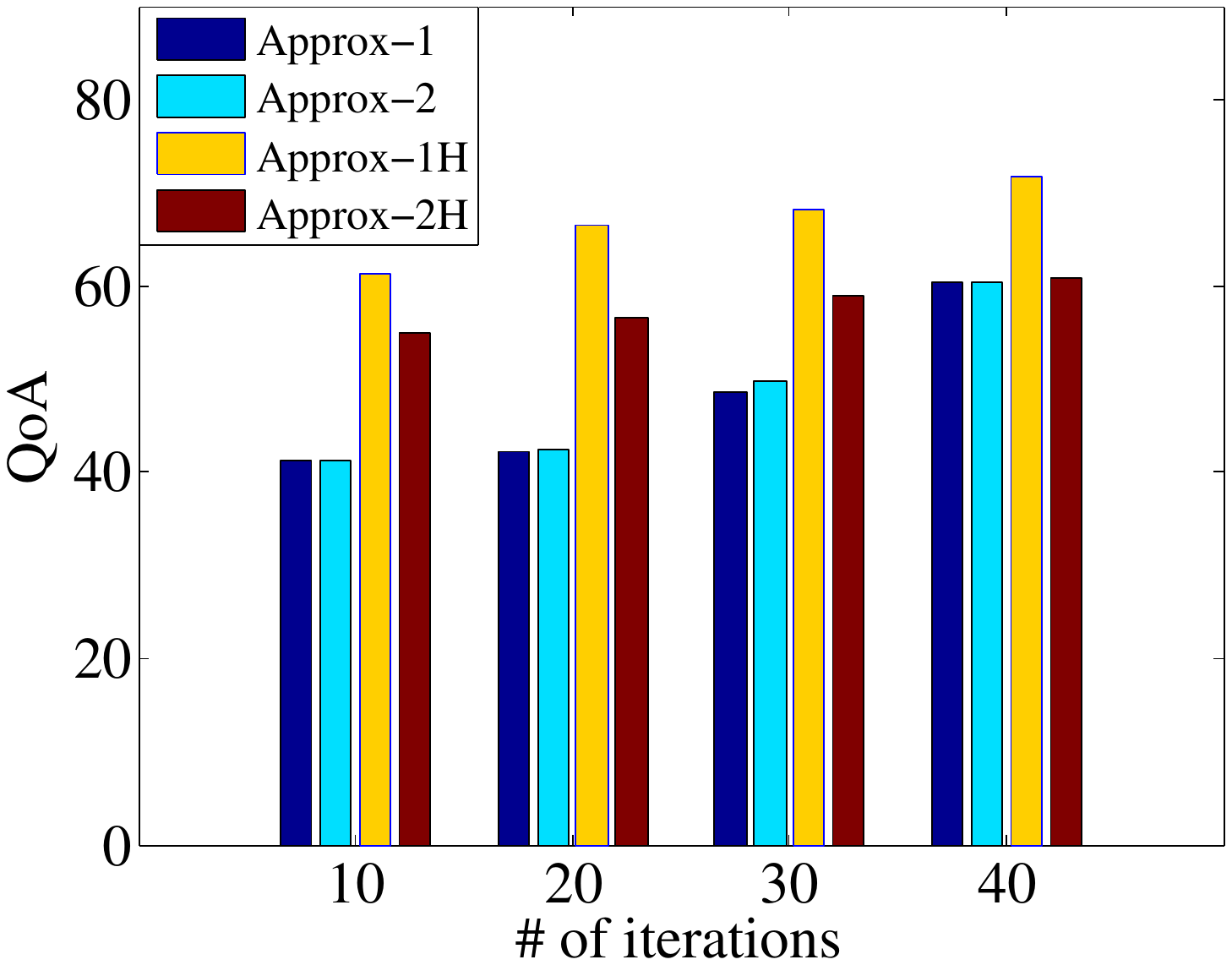}
\caption{$QoA$ vs. iteration numbers}%
\label{fig:iteration}%
\end{minipage}
\hspace{1mm}
\begin{minipage}[b]{.315\textwidth}
\includegraphics[width=\textwidth]{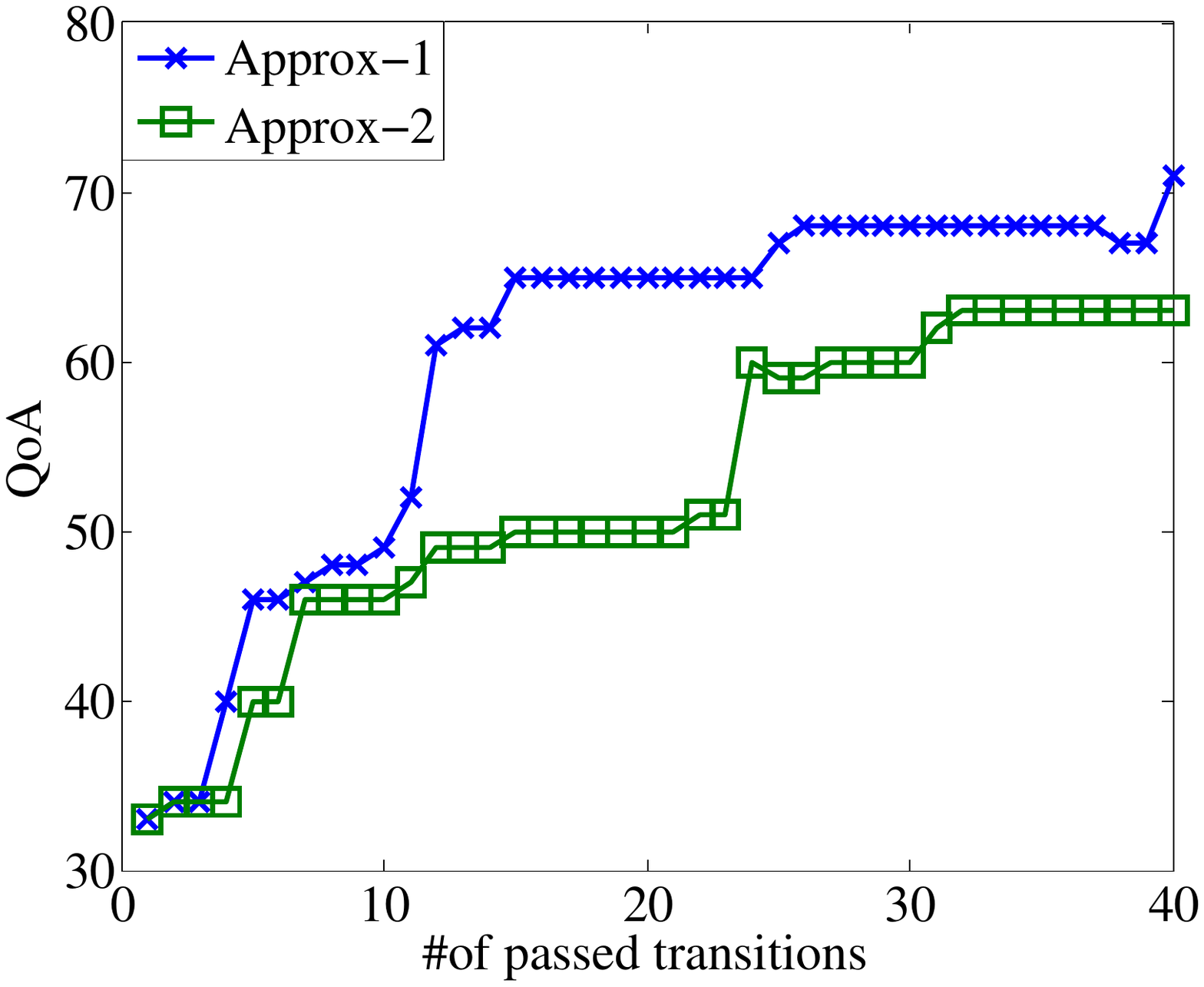}
\caption{Improvement of $QoA$ for a random topology}%
\label{fig:transition}%
\end{minipage}%
\end{figure}

\section{Conclusion}
\label{sec:conc}
In this paper, we addressed the NP-hard problem of constructing data aggregation tree in WSNs, with the goal of maximizing the number of nodes that the sink receives their data within an application-specific aggregation deadline. 
Two successive algorithms were proposed: \textit{first}, a distributed algorithm that runs in iterative manner and eventually converges to a bounded neighborhood of the optimum, and \textit{second}, a bootstrapping algorithm with low complexity that can be served as a good initial point for the former. Observations on experiments corroborated our analysis on the importance of constructing the optimal aggregation tree. Moreover, experimental results demonstrated that our methods not only achieved a close-to-optimal solution, but also significantly outperformed the existing methods that rely on fixed underlying data aggregation trees. 
Last but not the least, this work is the first attempt on leveraging Markov approximation as a general framework to tackle tree construction in WSNs and we believe that this solution approach can be used in several other applications for constructing trees in distributed manner. 

Obtained results open several important future directions. It would be interesting  to incorporate energy consumption and turn the problem to an energy-aware $QoA$ maximization one. This is important because the data aggregation is a periodic operation in the network and hence, relying on a fixed aggregation tree for a long time may lead to energy depletion of some specific nodes and degrade the network performance and lifetime. A wise policy might be to try to follow a uniform distribution of nodes' contribution in data aggregation, while keeping the $QoA$ at the desired level. The second line is to tackle forest construction problem for multi-sink networks. This is a challenging problem, since even the single sink scenario, as the special case of a multi-sink network, has been proved to be NP-hard.

\bibliographystyle{IEEEtranS}

\end{document}